\renewcommand\footnotetextcopyrightpermission[1]{} 
\DeclareMathAlphabet{\mathcal}{OMS}{cmsy}{m}{n}
\newcommand{\ourlogic}{\mathsf{SPQ}}
\newcommand{\Terms}{\mathsf{Terms}}
\newcommand{\Prop}{\mathsf{Prop}}
\newcommand{\Cost}{\mathsf{Cost}}
\newcommand{\Bdg}{\mathsf{Bdg}}
\newcommand{\cL}{\mathcal{L}}
\newcommand{\cM}{{\mathcal{M}}}
\newcommand{\LPL}{\cL_{PL}}
\newcommand{\ZZ}{\mathbb{Z}}
\newcommand{\AG}{\mathbb{AG}}
\newcommand{\Sub}{\mathsf{Sub}}
\algnewcommand\algorithmiccase{\textbf{case}}
\newenvironment{breakablealgorithm}
  {
   \begin{center}
     \refstepcounter{algorithm}
     \hrule height.8pt depth0pt \kern2pt
     \renewcommand{\caption}[2][\relax]{
       {\raggedright\textbf{\fname@algorithm~\thealgorithm} ##2\par}%
       \ifx\relax##1\relax 
         \addcontentsline{loa}{algorithm}{\protect\numberline{\thealgorithm}##2}%
       \else 
         \addcontentsline{loa}{algorithm}{\protect\numberline{\thealgorithm}##1}%
       \fi
       \kern2pt\hrule\kern2pt
     }
  }{
     \kern2pt\hrule\relax
   \end{center}
  }
\tikzset{%
round/.style={circle, draw=gray!60,fill=gray!5, very thick,minimum size=7mm, align=center},
dot/.style={draw, circle, minimum size=1mm,inner sep=0pt,outer sep=0pt,fill=black},
}
\title{Dynamic Epistemic Logic of Resource Bounded Information Mining Agents}
\author{Vitaliy Dolgorukov}
\affiliation{
  \institution{HSE University}
  \city{Moscow}
  \country{ Russian Federation}}
\email{vdolgorukov@hse.ru}
\author{Rustam Galimullin}
\affiliation{
  \institution{University of Bergen}
  \city{Bergen}
  \country{Norway}}
\email{rustam.galimullin@uib.no}
\author{Maksim Gladyshev}
\affiliation{
  \institution{Utrecht University}
  \city{Utrecht}
  \country{Netherlands}}
\email{m.gladyshev@uu.nl}
\begin{abstract}
Logics for resource-bounded agents have been getting more and more attention in recent years since they provide us with more realistic tools for modelling and reasoning about multi-agent systems. While many existing approaches are based on the idea of agents as imperfect reasoners, who must spend their resources to perform logical inference, this is not the only way to introduce resource constraints into logical settings. 
In this paper we study agents as perfect reasoners, who may purchase a new piece of information from a trustworthy source. For this purpose we propose dynamic epistemic logic for semi-public queries for resource-bounded agents. In this logic (groups of) agents can perform a query (ask a question) about  whether some formula is true and receive a correct answer. These queries are called semi-public, because the very fact of the query is public, while the answer is private. We also assume that every query has a cost and every agent has a budget constraint. Finally, our framework allows us to reason about group queries, in which agents may share resources to obtain a new piece of information together. We demonstrate that our logic is complete,
decidable and has an efficient model checking procedure.
\end{abstract}
\keywords{Resource Bounded Agents; Dynamic Epistemic Logic; Epistemic Logic; Group Queries; Common Knowledge}
\newcommand{\BibTeX}{\rm B\kern-.05em{\sc i\kern-.025em b}\kern-.08em\TeX}
\gdef\@copyrightpermission{
	\begin{minipage}{0.3\columnwidth}
		\href{https://creativecommons.org/licenses/by/4.0/}{\includegraphics[width=0.90\textwidth]{figs/by.eps}}
	\end{minipage}\hfill
	\begin{minipage}{0.7\columnwidth}
		\href{https://creativecommons.org/licenses/by/4.0/}{This work is licensed under a Creative Commons Attribution International 4.0 License.}
	\end{minipage}
	\vspace{5pt}
}
\begin{document}


\pagestyle{fancy}
\fancyhead{}



\maketitle


\section{Introduction}

 In our paper we present a logic for reasoning about resource bounded agents who might purchase new information from a reliable source but do not necessarily have enough budget resources. 

From a technical perspective, we propose a multi-agent Kripke-style semantics, in which propositional formulas have (non-negative) costs and agents have (non-negative) budgets, i.e. amounts of available resources. In order to deal with costs and budgets in our language, we use linear inequalities initially introduced in \cite{HalpernInequalities} for reasoning about probabilities. These inequalities allow us to formulate statements about formulas' costs, agents' budgets, and their comparisons explicitly as formulas of our language. 
We also use standard epistemic primitives for individual and common knowledge \cite{HalpernBook}. Combined with linear inequalities, these primitives allow us to express statements of the form ``agent $i$ knows that her budget is at least $k$", ``agent $j$ knows that the cost of $A$ is lower than the cost of $B$, but $j$ does not know the cost of $A$", ``it is common knowledge among group $G$ that the joint budget of another group $D$ is lower than the cost of $A$", etc. Finally, we introduce a dynamic operator in the vein of dynamic epistemic logic (DEL) \cite{DELbook} for (group) semi-public queries. This operator assumes that a group of agents $G$ may perform the following action. Let $A$ be some propositional formula and $\varphi$ be a formula of our logic introduced below. We assume that there is a reliable source of information, such that $G$ may perform a query to this source, ask ``Is $A$ true?" and receive a correct answer: either 'Yes' or 'No'. Once this answer is received, $\varphi$ holds. Such queries are semi-public in the sense that the answer is private, i.e. available to members of $G$ only, but the very fact of the query is public, i.e. other agents from $\AG - G$ (where $\AG$ is the set of all agents) observe the fact that $G$ has performed a query, but do not observe the answer. 

There is also an additional constraint in our framework: formula $A$ has a cost, and this cost can be different for different agents in $G$. The first assumption captures a natural intuition that access to the information is not always free, while the latter shows that this access can be non-symmetric (or non-egalitarian) among agents. For example, there can be premium access to a database, so the same query may have a lower cost for agents with this access; one lab may have  access to cheaper reagents than another, so it may perform the same test spending less resources and so on. So, in this settings it may be rational for agents to cooperate and optimize the amount of resources they need to obtain a certain piece of knowledge. The last assumption we introduce in this paper is that agents may share resources in groups. Thus, if $G$ decides to perform a query $A$, they identify the agent $i\in G$ for whom the cost of this query is the lowest and then share an equal amount of resources to perform the query. Such group queries are expressed as an operator $[?_G^A]\varphi$ of our logic. This operator is inspired by the (dynamic) epistemic logics of contingency \cite{fan_wang_ditmarsch_2015,vanDitmarsch-Fan}.  These logics focus on the notion of ‘knowing whether’, which clearly describes an epistemic attitude of agents after the query $[?_G^A]$: all agents $j\in \AG\setminus G$ know that all $i\in G$ know whether $A$ is true. 
An earlier version of our logic restricted to individual agents was proposed in \cite{Dali2022}.
The idea of group updates in dynamic epistemic logic was proposed and studied in, for example, \cite{Agotnes_vanDitmarsch2008,AGOTNES2010,Galimullin2021,Galimullin_2023}.

This paper is organized as follows. In Section 2 we introduce the language and models of our logic $\ourlogic$. In Section 3 we propose a sound and complete axiomatisation for $\ourlogic$. In Section 4 we provide a polynomial time algorithm for solving the global model checking problem for $\ourlogic$ and demonstrate that the satisfiability problem for $\ourlogic$ is decidable. Finally, in Section 5 we overview existing works in this field, and in Section 6
we discuss open problems and possible directions for future work. 

\section{Logic of Semi-public Queries}

\subsection{Language}


At first, we need to fix a propositional language $\LPL$. 
Let $\Prop$  denote a countable set of
propositional letters $\{p, q, \dots\}$. Language $\LPL$ is defined
 by the following grammar: 
\[
A \ \ ::= \ p\ \mid\ \neg A\ \mid\ (A \land A).
\]
Here $p \in \Prop$, and all the usual abbreviations of propositional logic (such as $\top, \bot, \to$, and $\vee$) hold. Let $\AG = \{a_1, \dots, a_k\}$ be a finite set of agents.
We fix a set of terms
\[\Terms = \{c_{(A, i)}\mid A\in \LPL, i\in\AG\}\cup \{b_i\mid i\in \AG\}\]
It contains a special term $c_{(A, i)}$ for the \emph{cost} of each propositional formula~$A$ for agent $i$,
and a term $b_i$ for the \emph{budget} of each agent~$i$. So, we assume that the same formula $A$ can have a different cost for different agents. Now, we can define our language $\cL_{\ourlogic}$. 

\begin{definition}[Language]\label{def:syntax} The \emph{language $\cL_{\ourlogic}$ of Epistemic Logic for Semi-Public Queries} is defined recursively as follows
\[\varphi :: = p \mid (z_1t_1+\dots + z_nt_n) \geq z)\mid
\neg\varphi \mid (\varphi \wedge\varphi) \mid K_i \varphi \mid C_G\varphi\mid [?_G^A]\varphi,\]
where $p\in \Prop$, $t_1, \dots , t_n \in \Terms$, ${z_1, \dots , z_n, z\in \ZZ}$, $i\in\AG$, $G\subseteq\AG$, and $A\in\LPL$.

\end{definition}

Here $K_i\varphi$ means \textit{ ``agent $i$ knows that $\varphi$ is true"}, and $C_G\varphi$ means \textit{``it is common knowledge among agents in $G$ that $\varphi$ is true"}. These are two standard operators for epistemic logic \cite{HalpernBook}. Linear inequalities of the form $(z_1t_1+ \dots + z_nt_n)\geq z$ \cite{HalpernInequalities}, in which only the terms $c_{(A, i)}$ and $b_i$ from $\Terms$ can occur, allow us to reason about formulas' costs and agents' budgets explicitly. 
Finally, we interpret the dynamic operator  $[?_G^A]\varphi$ as \textit{"after a group query by $G$ whether formula $A$ is true, $\varphi$ is true"}. This operator can alternatively be understood as ``\textit{if $G$ performs query $A$, they can achieve $\varphi$}".

The dual 
of $K_i$ is 
$\hat{K}_i \varphi := \neg K_i \neg \varphi$.
Abbreviation $E_G\varphi:= \bigwedge_{i\in G} K_i\varphi$ means \textit{``everybody in $G$ knows $\varphi$"}.  The dual for dynamic operator is $\langle?_G^A\rangle \varphi := \neg [?_G^A]\neg\varphi$.  For linear inequalities we use  the same abbreviations as in~\cite{HalpernInequalities}.
Thus, we write $t_1 - t_2 \ge z$ for $t_1 + (-1)t_2 \ge z$,
$t_1 \ge t_2$ for $t_1 - t_2 \ge 0$,
$t_1\leq z$ for $-t_1\ge -z$, $t_1<z$ for $\neg (t_1 \ge z)$,
and $t_1 = z$ for $(t_1\ge z)\land (t_1\leq z)$. A formula of the form $t\geq \frac{1}{2}$ can be viewed as an
abbreviation for $2t\ge 1$, so we allow rational numbers to appear in our
formulas. 
Other Boolean connectives $\to, \lor, \leftrightarrow, \bot$ and $\top$ are defined in the standard way.  In the rest of the paper we slightly abuse the notation and write $c_i(A)$ instead of $c_{(A, i)}$. We denote the set of \emph{subformulas} of a formula $\varphi$ as $\Sub(\varphi)$ .

Thus, the static fragment of our language $\cL_{\ourlogic}$ (i.e. the fragment without $[?_G^A]\varphi$) allows us to express statements of the form 
$c_i(p \vee q) \ge 10$ for \textit{``the cost of the query whether $p$ or $q$ is true for agent $i$ is at least 10"}, $b_j \ge 3$ for \textit{``the budget of agent $j$ is at least 3"}, $2 b_j = b_i$ for \textit{"$i$'s budget is twice as big as that of $j$"}, $K_a (b_i + b_j) \ge c_i(p \vee q)$ for \textit{``agent $a$ knows that the joint budget of $i$ and $j$ is higher than the cost of $p\vee q$ for agent $i$"}, etc. The dynamic operator can express statements like $[?_{\{i, j\}}^{(p\vee q)}]C_{\{i, j\}}\neg p$ meaning that \textit{``after a joint query about $p\vee q$ by $\{i, j\}$ it is common knowledge among $\{i, j\}$ that $p$ is false"}. 

\subsection{Semantics}
 
Now we are ready to discuss the semantics for $\cL_{\ourlogic}$ formulas. Models of our logic are basically Kripke-style models endowed with Cost and Budget functions.

\begin{definition}[Model]\label{def:model} A \emph{model} is a tuple $\cM = (W, (\sim_i)_{i\in \AG},$ $\Cost,$ $\Bdg,$ $V)$, where
\begin{itemize}
    \item $W$ is a non-empty set of \emph{states},
    \item ${\sim_i} \subseteq (W\times W)$ is an equivalence relation for each ${i\in \AG}$,
    \item $\Cost\colon \AG \times W\times \LPL \longrightarrow \mathbb{Q}^+\cup \{0\}$ assigns the (non-negative) \emph{cost} to propositional formulas for each agent in each state,
    \item $\Bdg\colon \AG \times W \longrightarrow \mathbb{Q}^+\cup \{0\}$ is the (non-negative) \emph{bugdet} of each agent at each state,
    \item $V\colon \Prop \longrightarrow 2^W$ is a \emph{valuation} of propositional variables.
\end{itemize}
Let $P (w) := \{p \in \Prop \mid w \in V(p)\}$ be the set of all propositional variables that are true in state $w$. Moreover, let $c_{>0} (i,w) := \{A \in \mathcal{L}_{PL} \mid \Cost_i(w,A)>0\}$ be the set of all propositional formulas with positive cost for agent $i$ and state $w$. We call a model \emph{finite}, if all of $W$, $\bigcup_{w \in W} P(w)$, and $\bigcup_{(i,w) \in \AG \times W} c_{>0} (i,w)$ are finite. 
Given a finite model $\cM$ we define the \emph{size} of $\cM$, denoted $|\cM|$, as 
\begin{multline*}
    |\cM|:= \mathrm{card}(W) + \sum_{i \in \AG}\mathrm{card}(\sim_i) +\\ + \sum_{(i,w) \in \AG \times W} \mathrm{card} (c_{>0}(i,w)) + \sum_{w \in W}\mathrm{card}(P(w))
\end{multline*}
\end{definition}

 We intentionally put as few restrictions on the $\Cost$ function as possible to consider the most general case. Thus, our framework allows us to model situations in which the cost of the same formula is different for different agents and it can also be different across different states in $W$ for the same agent. Thus, the agent may be unaware of the cost of some formula for herself as well as for other agents. Another important point is how the costs of different formulas must be related to each other \cite{Feldman2000}. 
The only two restrictions we find important to enforce are that the cost of propositional
tautologies must be zero for all agents, and that the costs of \emph{similar} formulas must be the same. By \emph{similar} formulas $A$ and $B$ (denoted $A\approx B$) we mean that \ $A \approx B$ \ iff \ $A\equiv B$ or $A\equiv \neg B$, where $A\equiv B$ denotes \emph{equivalent} formulas: $A\equiv B$ iff $\vdash_{PL}A\leftrightarrow B$.  Formally, for all $i\in \AG, w\in W$ we require that\\
(C1) $\Cost_i(w, \top) = 0$,\\
(C2) $A\approx B$ implies $\Cost_i(w, A)=\Cost_i(w, B)$.

For the $\Bdg$ function we only assume that it is non-negative for every agent. But the budget of each agent may be different in different states of $W$, so in our framework agents may be unaware of their and others' budgets. To better illustrate the proposed semantics consider a simple example.

\begin{example}[Telescope example]\label{simpleexample} Three countries $n, m$ and $l$ are seeking to know a certain fact $p$ about the universe. If any of them build a very expensive telescope, it will give them a correct answer. Country $n$ is the richest among others having 15 abstract resources ($b_n=15$). But due to some reasons, e.g. higher labour costs, it requires the highest amount of resources $c_n(p)=30$ to build a telescope there. Country $m$ has only 10 resources ($b_m=10$), but it can build a telescope for $c_m(p)=20$, for example due to better logistics. Country $l$ is the poorest country with only $b_l=5$ resources, while the cost of a telescope is the same as for $n$, so $c_l(p)=30$. Finally, we assume that the costs of the telescope are known to all agents, $n$ and $m$ know the budgets of each other as well as $l$'s budget. But $l$ is unaware of the exact budget of $m$: it considers both options $b_m=10$ and $b_m = 9$ as possible ones. 
\end{example}

\begin{figure}[t]
\resizebox{.47\textwidth}{!}{%
    \begin{tikzpicture} 
    \node[dot] (w1) at (-3, 0) [label=below left:\underline{$w_1$}] {};
    \node (w1') [above=.1cm of w1] {$p, b_m=10$};

    \node[dot] (w2) [label=below left:$w_3$] [right=2.5cm of w1] {};
    \node (w2') [above=.1cm of w2] {$p, b_m=9$};

    \node[dot] (w3) [label=left:$w_2$] [below=2cm of w1] {};
    \node (w3') [below=.1cm of w3] {$\neg p, b_m=10$};

    \node[dot] (w4) [label=above left:$w_4$] [below=2cm of w2] {};
    \node (w4') [below=.1cm of w4] {$\neg p, b_m=9$};

     \draw[latex-latex, dashed] (w1) edge node[anchor=south,midway,sloped, above] {$l, m, n$} (w3);
     \draw[latex-latex, dashed] (w2)  edge node[anchor=south,midway,sloped, above] {$l, m, n$}  (w4);

      \draw[latex-latex, dashed] (w1) edge node[anchor=south,midway,sloped, below] {$l$} (w2);

       \draw[latex-latex, dashed] (w3) edge node[anchor=south,midway,sloped, below] {$l$} (w4);

    \node at (1, -1) {$\xRightarrow{?_{\{m,n\}}^p}$};


    \node[dot] (w1) at (2.3, 0) [label=left:$\underline{w_1}$] {};
    \node (w1') [above=.1cm of w1] {$p, b_m=0$};

    \node[dot] (w2) [label=left:$w_2$] [below=2.1cm of w1] {};
    \node (w2') [below=.1cm of w2] {$\neg p, b_m=0$};

     \draw[latex-latex, dashed] (w1) edge node[anchor=south,midway,sloped, above] {$l$} (w2);

    \node at (-1.8, -3.1) {$\mathcal{M}$};
    \node at (2.3, -3.1) {$\mathcal{M}^{?_{\{m,n\}}^p}$};
    \end{tikzpicture}
    }
    \caption{(Left) Model $\cM$ for \Cref{simpleexample}. The model contains four states $w_1, w_2, w_3$ and $w_4$, and we assume that $w_1$ is the actual one. Arrows represent epistemic equivalence classes for agents $l, m$ and $n$. Reflexive and transitive arrows are omitted for readability. Formulas $b_n=15, b_l=5, c_n(p)=30, c_m(p)=20$ and $c_l(p)=30$ hold in all four states, and we omit them from the figure.  (Right) Updated model $\cM^{?_{\{n, m\}}p}$ for \Cref{simpleexample}. Formulas $b_l=5, c_n(p)=30, c_m(p)=20$ and $c_l(p)=30$ hold in both $w_1$ and $w_2$ as before. But $b_n=15$ no longer holds, since $n$'s budget is decreased after update: $b_n-\frac{c_m(p)}{|\{n, m\}|}=5$. 
    } 
    \label{fig:simpleexample}
\end{figure}
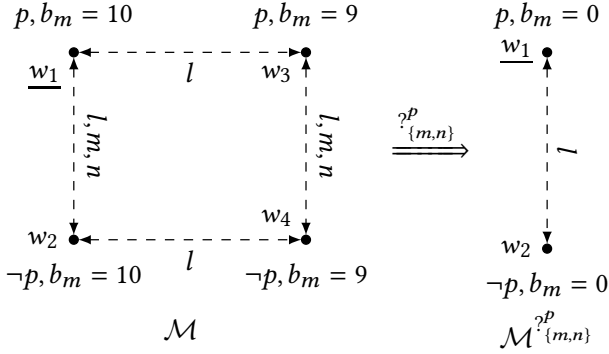

The model of this example is depicted in \Cref{fig:simpleexample}.
In this example, no single country has a sufficient budget to build the telescope, i.e. to perform an individual query about $p$. But there is still a way for them to cooperate and build a telescope. If $n$ and $m$ share their resources, and $m$ builds the telescope, then each of the countries $m$ and $n$ can spend 10 resources to get the information about $p$. This procedure can be expressed by our dynamic operator $[?_{\{n, m\}}^p]$. 

Now we are ready to discuss the semantics of semi-public group queries $[?_G^A]$. 
Recall that $[?_G^A]\varphi$ means \textit{``after a group query by $G$ whether a formula $A$ is true, $\varphi$ is true”}. We assume that each group member receives the correct information about the truth of $A$ after $[?_G^A]$. And we also assume that each group member spends equal amount of resources on this query. But as we already mentioned, the cost of $A$ can be different for different members of $G$.  So, it is natural to assume that the lowest of these costs must be spent. In other words, $[?_G^A]$ query in a state $w$ can be described by the following procedure: identify $i\in G$ with the lowest cost of $A$, let each member of $G$ transfer $\frac{Cost_i(w, A)}{|G|}$ resources to $i$, then let $i$ ask whether $A$ is true and tell the answer to all agents in $G$.

More formally, let us abbreviate \textit{``the Budget Constraint of agent $i\in G$ for the $G$'s query $A$"} as \[\mathbf{BC}_i(G, A) \equiv \frac{\min_{j\in G}(c_j(A))}{|G|}\] So, $\mathbf{BC}_i(G, A)$ denotes the budget that would be sufficient for $i$ to participate in a $G$'s group query whether $A$ is true. 

We also denote the fact that \textit{``the Budget Constraint for the query $A$ for $G$ is Satisfied"} as
\[\text{BCS}(G, A)\equiv \bigwedge\limits_{i\in G}(b_i \geq \mathbf{BC}_i(G, A))\] 
If $\text{BCS}(G, A)$ holds, we say that the query $[?_G^A]$ is realisable meaning that each group member has enough resources to cooperate according to our resource distribution rule\footnote{Note that linear inequalities in our language are capable enough to express alternative resource distribution rules. For example, we can assume that according to another rule $\text{BCS}'(G, A)$ we pick the highest, but not the lowest cost among members of $G$. This rule looks more arguable, but it is clearly expressible in $\ourlogic$. So, any expressible $\text{BCS}'(G, A)$ can be integrated in our framework with minimal changes.}. Note that $\text{BCS}(G, A)$ is in fact a formula of $\ourlogic$:
\[\text{BCS}(G, A)\equiv \bigvee\limits_{j\in G}\biggl(\bigwedge\limits_{i\in G}\Bigl(c_j(A)\leq c_i(A) \wedge b_i\geq \frac{c_j(A)}{|G|}\Bigl)\biggl)\]


\begin{definition}[Updated Model]\label{def:updmodel} Given a model $\cM$, a group $G\subseteq \AG$ and a formula $A\in \LPL$, an \emph{updated model} $\cM'$ is a tuple
$\cM'= (W',$ $(\sim'_j)_{j\in\AG},$ $\Cost',$ $\Bdg',$ $V')$, where
\begin{itemize}
\item $W' = \{ w \in  W \mid  \cM, w \vDash \text{BCS}(G, A) \}$;
\item $
{\sim_j'} = (W' \times W') \cap {\sim^*_j}$ with
\[\sim^*_j =  \begin{cases} \sim_j & \mbox{if } j\notin G,\\
{\sim_j}\bigcap 
\Bigl(([A]_\cM \times [A]_\cM) \bigcup ([\neg A]_\cM \times [\neg A]_\cM)\Bigr) & \mbox{if } j\in G;\end{cases}\]
\item $\Cost_j'(w, B) = \Cost_j(w, B)$, for all $B\in \LPL, j\in \AG$;
\item $\Bdg_j'(w) = \begin{cases} \Bdg_j(w) - \frac{\min\limits_{i\in G}\Cost_i(w, A)}{|G|},&\mbox{if } j\in G, \\
\Bdg_j(w), &\mbox{if } j\notin G,\end{cases}$
\item $V'(p) = V(p) \cap W'$ for all $p\in \Prop$.
\end{itemize}
By $[A]_{\cM}$ we denote the set of states in $W$, such that each $w\in [A]_{\cM}$ satisfies the formula $A$ in a sense of \Cref{def:semantics}.
\end{definition}

Intuitively, an update $[?_G^A]$ of a model $\cM$ firstly removes all states of $\cM$ in which at least one agent in $G$ does not have a sufficient amount of resources for a $G$'s query about $A$. This can be justified by the fact that agents do not necessarily know others budgets, but when they observe the fact that $G$ actually performs a query $A$, it no longer makes sense to consider the states in which $\text{BCS}(G, A)$ does not hold as possible ones. Note also that when $G$ performs the semi-public query \textit{``is $A$ true?"}, it gets either \textit{'Yes'} or \textit{'No'} as an answer and we consider this fact to be known by all agents. Then, after the update, all agents in $G$ necessarily distinguish any two states of $\cM$ that do not agree on the valuation of $A$. But since the actual answer is available only to the agents from $G$, epistemic relations of other agents remain the same, only taking into account that some states have been removed. This update does not affect the costs of formulas and budgets of all agents outside of $G$. The budget of each $i\in G$ decreases by the minimal cost of $A$ in the group divided by the size of this group. 

Returning to \Cref{simpleexample}, consider the result of updating $\cM$ with $[?_{\{n, m\}}p]$ in \Cref{fig:simpleexample}. 
After this update, both $n$ and $m$ will know that $p$ is true, moreover it will be common knowledge among them. Note that since $l$ is not a member of this group, she will remain unaware of whether $p$ is true or not. But since the fact that the telescope is built is public, $l$ will know that $n$ and $m$ know whether $p$ is true, denoted $K_l(E_{\{n,m\}}p\vee E_{\{n,m\}}\neg p)$. Moreover, $l$ will know that $\{n, m\}$ has common knowledge whether $p$ is true, denoted $K_l(C_{\{n,m\}}p\vee C_{\{n,m\}}\neg p)$.
This is why we call these queries semi-public. Note also that budgets of $n$ and $m$ are also decreased by 10 according to the resource distribution rule. 

We call two states $x, y\in W$ $G$-\emph{reachable} iff there is a sequence of states $w_0, \dots, w_k$ (where $k\geq 1$), such that $w_0=x$ and $w_k=y$ and for all $0\leq j \leq k-1$ there exists $i\in G$ such that $w_j\sim_iw_{j+1}$. We denote the  $G$-reachability relation as $\sim_G$. 

\begin{definition}[Semantics]\label{def:semantics}  The \emph{truth} $\vDash$ of a formula $\psi\in\cL(\ourlogic)$
at a state $w\in W$ of a model $\cM$ is defined by induction:\\
$\cM, w\vDash p$ iff $w\in V(p)$, \\
$\cM, w\vDash \neg \varphi$ iff $\cM, w\nvDash \varphi$,\\
$\cM, w\vDash \varphi \land \psi$ iff $\cM, w\vDash \varphi$ and $\cM, w \vDash \psi$, \\
$\cM, w\vDash K_i\varphi$ \ iff \ $\forall w'\in W$: $w\sim_i w' \Rightarrow \cM, w'\vDash \varphi$,\\
$\cM, w\vDash C_G\varphi$ \ iff \ $\forall w'\in W$: $w\sim_G w' \Rightarrow \cM, w'\vDash \varphi$,\\
$\cM, w\vDash (z_1t_1+\dots + z_nt_n) \ge z$ iff $(z_1t'_1+\dots + z_nt'_n) \ge z$,
where for $1\leq k \leq n$,
\begin{center}
$t'_k =
  \begin{cases}
    \Cost_i(w, A), & \text{for } t_k = c_i{(A)} \\
    \Bdg_i(w), & \text{for } t_k = b_i. 
  \end{cases}$
\end{center}
$\cM, w\vDash [?_G^A]\varphi$ \quad iff \quad $\cM, w\vDash \text{BCS}(G, A)$ implies $\cM^{?_G^A}, w\vDash \varphi$, where $\cM^{?_G^A}$ is an updated model in the sense of \Cref{def:updmodel}.
\end{definition}



    



$\ourlogic$ provides us with tools for reasoning about 
coalition formation strategies to obtain the desirable knowledge by group queries. As \Cref{simpleexample} demonstrates, sometimes no single agent has enough resources to achieve certain knowledge, but this knowledge is still achievable via cooperation. Moreover, in many situations not only the choice of coalition for group query matters, but also the choice of the right sequence of these queries. Thus, $\ourlogic$ allows us to reason 
about planning the order of queries as well.




\section{A Sound and Complete Axiomatisation of $\ourlogic$}


The axiomatisation of $\ourlogic$ is presented in \Cref{tab:ax}. Axioms (I1)-(I6) for linear inequalities were proposed in \cite{HalpernInequalities}. Axioms (K)-(C), (Nec$_i$), (RC1) are standard axioms and inference rules for epistemic logic with common knowledge $S5^C$ \cite{HalpernBook}. (B$^+$)-(c$^\approx$) are axioms ensuring all necessary properties of  $\Bdg$ and $\Cost$ functions. Axioms (r$_p$)--(r$_{K2}$), (Rep), (RC2) are reduction-style axioms and rules for dynamic operator $[?_G^A]$. In $r_{\geq}$ axiom, $\bigl(\sum\limits_{i=1}^{k}a_it_i\geq z\bigr)^{(G, A)}$ denotes $\bigl(\sum\limits_{i=1}^{k}a_it_i\geq z\bigr)$, in which all occurrences of $b_i$ for $i\in G$ among $t_1, \dots, t_k$ are replaced with $(b_i-\frac{c_j(A)}{|G|})$, where $j$ is the agent that occurs in $\text{BCS}(G, A)$, i.e. for whom the cost of $A$ is minimal:  $\bigwedge\limits_{k\in G}c_j(A)\leq c_k(a)$.
Soundness of (r$_p$)--(r$_{K2}$) can be shown by a direct application of the definition of semantics. For details see Technical Appendix. 
\begin{theorem}
    The axiomatisation of $\ourlogic$ is sound.
\end{theorem}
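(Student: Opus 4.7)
The plan is to verify soundness axiom-by-axiom, grouping the schemata so that each group needs one uniform argument. First I would partition the axiomatisation into four blocks: (i) the linear-inequality axioms (I1)--(I6), (ii) the epistemic block (K)--(C), (Nec$_i$), (RC1), (iii) the cost/budget axioms (B$^+$)--(c$^\approx$), and (iv) the reduction axioms (r$_p$)--(r$_{K2}$), (Rep), (RC2). For block (i) the relevant clause of \Cref{def:semantics} evaluates each term by substituting the actual $\Cost_i(w,A)$ or $\Bdg_i(w)$, so the satisfaction of $(z_1 t_1 + \dots + z_n t_n) \ge z$ reduces to satisfaction of a genuine linear inequality over $\mathbb{Q}$; soundness is then inherited directly from \cite{HalpernInequalities}. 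For block (ii), since each $\sim_i$ is an equivalence relation and $\sim_G$ is its standard reflexive-transitive closure, the usual $S5^C$ soundness argument applies unchanged.

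For block (iii), (B$^+$) follows from the codomain $\mathbb{Q}^+\cup\{0\}$ of $\Bdg$, the cost-nonnegativity axiom follows from the codomain of $\Cost$, and the axioms corresponding to (C1) and (C2) are immediate from the constraints $\Cost_i(w,\top)=0$ and $A \approx B \Rightarrow \Cost_i(w,A)=\Cost_i(w,B)$ imposed in \Cref{def:model}; these are one-line checks on the truth clause for inequalities.

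Block (iv) is where the real work lies, and I would proceed by induction on the complexity of the postcondition $\varphi$ in $[?_G^A]\varphi$, handling each reduction axiom as the corresponding inductive case. The atomic case (r$_p$) and the Boolean cases follow from the fact that $V'(p)=V(p)\cap W'$ and that truth at a surviving state $w\in W'$ of $\neg\varphi$ or $\varphi\wedge\psi$ commutes with the update in the obvious way, modulo the guarding hypothesis $\text{BCS}(G,A)$ built into the semantics of $[?_G^A]$. For the inequality reduction (r$_\ge$), the key observation is that for $w\in W'$ the updated budget of $i\in G$ is exactly $\Bdg_i(w)-\frac{\min_{j\in G}\Cost_j(w,A)}{|G|}$, and on $W'$ the formula $\bigwedge_{k\in G} c_j(A)\le c_k(A)$ pins down the minimiser $j$; substituting $b_i \mapsto b_i - \frac{c_j(A)}{|G|}$ inside the inequality therefore yields the same numerical constraint as evaluation in $\cM^{?_G^A}$. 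The epistemic reductions (r$_{K1}$), (r$_{K2}$) are proved by unpacking the definition of $\sim_j'$: for $j\notin G$, $\sim_j'$ is the restriction of $\sim_j$ to $W'$, which matches the form of the reduction axiom that relativises $K_j$ to $\text{BCS}(G,A)$; for $j\in G$, the extra intersection with $[A]_\cM^2 \cup [\neg A]_\cM^2$ is precisely what the axiom encodes by splitting on the truth value of $A$ at $w$.

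The hard part will be the common-knowledge case, which is why the axiomatisation provides an inference rule (RC2) rather than a reduction axiom: there is no finitary equivalent static formula for $[?_G^A]C_H\varphi$. For soundness of (RC2) I would argue contrapositively, exploiting that if $\cM,w \vDash \text{BCS}(G,A)$ and $\cM^{?_G^A},w \nvDash C_H\varphi$ then some $\sim_H'$-path in $\cM^{?_G^A}$ witnesses failure, and then lift this path back to $\cM$ using the definitions of $\sim_j'$ to contradict the premise of the rule. Finally, (Rep) is justified by a routine induction showing that provably equivalent postconditions yield the same truth value after any update, which closes the case analysis. Throughout, I would keep the guard $\text{BCS}(G,A)$ explicit, since the semantic clause $\cM,w\vDash[?_G^A]\varphi$ is vacuously true when the budget constraint fails; this matches the implicational shape of every reduction axiom and rule.
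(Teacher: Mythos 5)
Your proof is correct and its overall structure --- the four-block decomposition, inheriting (I1)--(I6) from the Halpern-style semantics of linear inequalities, the one-line checks of (B$^+$)--(c$^\approx$) against the constraints on $\Cost$ and $\Bdg$ in \Cref{def:model}, and the direct semantic unfolding of (r$_{\ge}$), (r$_{K1}$), (r$_{K2}$) via the updated budgets and the two cases of $\sim'_j$ --- matches the paper's proof almost exactly. The one genuine divergence is (RC2): the paper argues forward, using the two premises to establish $\chi \to E_H(\chi \wedge \psi)$ in the updated model and then invoking the already-established soundness of (RC1) to conclude $\chi \to C_H\psi$ there; you argue contrapositively, taking a failing $\sim'_H$-path in $\cM^{?_G^A}$, lifting it back to $\cM$, and propagating $\chi$ along it using the second premise (splitting on whether each step is by an agent in $H\setminus G$ or in $H\cap G$, where the intersection with $([A]_\cM \times [A]_\cM) \cup ([\neg A]_\cM \times [\neg A]_\cM)$ supplies the needed agreement on $A$). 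Both amount to the same induction on path length; the paper's version is slightly slicker because it reuses (RC1), while yours makes the path-lifting mechanism more explicit. One small remark: your framing of block (iv) as an induction on the complexity of the postcondition is unnecessary for soundness --- each reduction axiom is a schema whose every instance is verified directly by unfolding the semantics, with no inductive hypothesis on $\varphi$; the complexity measure is only needed later, in the truth lemma for completeness.
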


If there is a derivation of $\varphi$ from the axioms and rules of inference of $\ourlogic$, we say that $\varphi$ is a \emph{theorem} of $\ourlogic$ and write $\vdash_{\ourlogic} \varphi$. We write $\vdash$, when the logic we refer to is clear from the context.

Observe that the presence of reduction axioms for the dynamic operator allows us to `translate away' the dynamic operator thus showing that $\ourlogic$ \textit{without common knowledge} is equally expressive as $\ourlogic$ \textit{without common knowledge and dynamic operators}. The completeness of $\ourlogic$ without common knowledge follows trivially from the completeness of $\ourlogic$ without common knowledge and dynamic operators \cite{HalpernInequalities}.

Note that we do not have a complete reduction of $\ourlogic$ to its static fragment due to the presence of common knowledge \cite{DELbook,Wang2013}.
Indeed, we can show that $\ourlogic$ is strictly more expressive than $\ourlogic$ without common knowledge by reusing the argument from, e.g., \cite[Theorem 8.34]{DELbook} and setting all agents' budgets to 0, and cost of all propositional formulas to 1.
So, the reduction argument does not work for our completeness proof.

\begin{table}[t]
    \centering
\begin{tabular}{||ll||}
\hline 
\textbf{Axioms:} &\\
(Taut)  & All propositional tautologies \\
(I1)& $(\sum\limits_{i=1}^{k}a_it_i\geq c) \leftrightarrow (\sum\limits_{i=1}^{k}a_it_i+0t_{k+1}\geq c)$\\
(I2)& $(\sum\limits_{i=1}^{k}a_it_i\geq c) \rightarrow (\sum\limits_{i=1}^{k} a_{j_i}t_{j_i}\geq c)$,\\ 
&
where $j_1,\dots, j_k$ is a permutation of $1, \dots, k$\\
(I3)& $(\sum\limits_{i=1}^{k}a_it_i\geq c)\wedge (\sum\limits_{i=1}^{k}a'_it_i\geq c')\to$\\ & $\to \sum\limits_{i=1}^{k}(a_i+a'_i)t_i\geq (c+c')$\\
(I4)& $(\sum\limits_{i=1}^{k}a_it_i\geq c)\leftrightarrow (\sum\limits_{i=1}^{k}da_it_i\geq dc)$ for $d>0$\\
(I5)& $(t\geq c)\vee (t\leq c)$\\
(I6)& $(t\geq c)\to (t>d)$, where $c>d$\\
    \hline
        (K)& $K_i(\varphi \to \psi)\to (K_i\varphi \to K_i \psi)$\\
(T)& $K_i\varphi \to \varphi$\\
(4)& $K_i\varphi \to K_iK_i\varphi$\\
(5)& $\neg K_i\varphi \to K_i\neg K_i\varphi$\\
(C)& $C_G \varphi \to E_G(\varphi \wedge C_G\varphi)$ \\
\hline
(B$^+$)& $b_i \ge 0 $\\
(c$^+$)& $c_i(A) \ge 0 $\\
(c$^\top$) & $c_i(\top) = 0$\\
(c$^\approx$)&$c_i(A) = c_i(B)$ if $A\approx B$\\
\hline
(r$_{p}$) & $[?_G^A]p \leftrightarrow (\text{BCS}(G, A)\to p)$ \\
(r$_{\ge}$) & $[?_G^A]\bigl(\sum\limits_{i=1}^{k}a_it_i\geq z\bigr) \leftrightarrow $\\
&$\leftrightarrow \Big(\text{BCS}(G, A)\to \bigl(\sum\limits_{i=1}^{k}a_it_i\geq z\bigr)^{(G, A)}\Big)$\\
(r$_{\neg}$) & $[?_G^A]\neg \varphi \leftrightarrow \text{BCS}(G, A) \to \neg [?_G^A]\varphi$ \\
(r$_{\wedge}$) & $[?_G^A](\varphi \wedge \psi)   \leftrightarrow [?_G^A]\varphi \wedge [?_G^A]\psi$\\
(r$_{K1}$) & $[?_G^A]K_j \varphi \leftrightarrow \text{BCS}(G, A) \to K_j [?_G^A]\varphi$, for $j\notin G$ \\
(r$_{K2}$) & $[?_G^A]K_i \varphi \leftrightarrow \text{BCS}(G, A) \to$\\
&$\to\bigwedge\limits_{A'\in \{A, \neg A\}}\Big(\bigl(A' \to K_i(A' \to [?_G^A]\varphi)\bigr)\Big)$, for $i\in G$\\
\hline
\textbf{Rules:} & \\
(MP) & from $\varphi$ and $\varphi\rightarrow\psi$, infer $ \psi$ \\
(Nec$_{i}$) & from $\varphi$ infer $ K_i\varphi$\\
(RC1)& from $ \varphi \to E_G(\varphi \wedge \psi)$, infer $ \varphi \to C_G\psi$\\
(Rep)& from $\varphi \leftrightarrow \psi$, infer \\
&$[?_G^A]\varphi \leftrightarrow [?_G^A]\psi$ for any $A\in\cL_{PL}$\\
(RC2) & from $\chi\to[?_G^A]\psi$ and $(\chi \wedge \text{BCS}(G, A))\to$\\
&$\to\bigwedge\limits_{A'\in \{A, \neg A\}}(A'\to E_{H\cap G}(A'\to\chi))\wedge E_{H\setminus G}\chi$,  \\
& infer $\chi\to [?_G^A]C_H\psi$ \\
         \hline
    \end{tabular}
    \caption{Proof system for $\ourlogic$.}
    \label{tab:ax}
\end{table}

Moreover, $\ourlogic$ is not compact due to the presence of common knowledge. But even the  $C_G$-free fragment is not compact due to the linear inequalities. Consider a set of $\ourlogic$-formulas: $\{c_i(A)>n\mid n\in \mathbb{N}\}$. It is easy to see that any finite subset of this set is satisfiable while the set itself is not. 

\begin{theorem}
    $\ourlogic$ is not compact.
\end{theorem}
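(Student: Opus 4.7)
The plan is to follow the hint already given in the excerpt and formalise it into a complete argument. Fix an agent $i \in \AG$ and a propositional formula $A \in \LPL$ that is not a tautology (say $A = p$ for some $p \in \Prop$), and consider the infinite set of $\ourlogic$-formulas
\[
\Gamma := \{\, c_i(A) > n \;\mid\; n \in \NN \,\}.
\]
Recall that by the abbreviations introduced for linear inequalities, $c_i(A) > n$ is shorthand for $\neg(n - c_i(A) \geq 0)$ or equivalently $\neg(c_i(A) \leq n)$, so each member of $\Gamma$ is a well-formed formula of $\cL_{\ourlogic}$. To establish non-compactness it suffices to show that every finite $\Gamma_0 \subseteq \Gamma$ is satisfiable while $\Gamma$ itself is not.

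For the finite-satisfiability step, I will take an arbitrary $\Gamma_0 \subseteq \Gamma$ and let $N := \max\{n \mid (c_i(A) > n) \in \Gamma_0\}$, which exists since $\Gamma_0$ is finite. I then construct a one-state model $\cM_N = (\{w\}, (\sim_j)_{j\in\AG}, \Cost, \Bdg, V)$ with each $\sim_j$ the identity, $V$ arbitrary, every $\Bdg_j(w) := 0$, and a $\Cost$ function satisfying the required constraints (C1) and (C2) with $\Cost_i(w, A) := N+1$; for any other propositional formula $B$ not $\approx$-related to $A$ or $\top$, I can set $\Cost_j(w, B)$ to any non-negative rational (say $0$). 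Constraints (C1) and (C2) are easy to verify because $A \not\approx \top$. By the semantic clause for linear inequalities, $\cM_N, w \vDash c_i(A) > n$ for every $n \leq N$, so $\cM_N, w \vDash \Gamma_0$.

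For the unsatisfiability of $\Gamma$, suppose towards contradiction that some $\cM, w \vDash \Gamma$. Then by \Cref{def:semantics} we would have $\Cost_i(w, A) > n$ for every $n \in \NN$. But by \Cref{def:model} the value $\Cost_i(w, A)$ is a specific element of $\mathbb{Q}^+ \cup \{0\}$, and any non-negative rational number is bounded above by some natural number, contradicting the assumption. Hence $\Gamma$ is unsatisfiable, witnessing the failure of compactness.

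There is no real obstacle here: the argument rests entirely on the Archimedean property of $\mathbb{Q}$ together with the freedom granted by \Cref{def:model}, which places essentially no constraints on the $\Cost$ function beyond (C1) and (C2). The only bookkeeping point worth flagging is the need to choose $A$ so that the restrictions on costs of tautologies and $\approx$-equivalent formulas do not prevent us from setting $\Cost_i(w, A)$ to an arbitrarily large value; any non-tautological propositional letter works.
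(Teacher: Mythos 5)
Your proof is correct and follows exactly the paper's own argument: the paper uses the same witness set $\{c_i(A)>n\mid n\in\mathbb{N}\}$ and simply asserts that finite satisfiability holds while the whole set fails, which you have filled in with the one-state model construction and the boundedness of any fixed value in $\mathbb{Q}^+\cup\{0\}$. The only detail worth noting is that your choice of a non-tautological (and non-contradictory) propositional letter correctly sidesteps constraints (C1) and (C2), just as intended.
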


For this reason, in the rest of this section we prove the \textit{weak completeness} of  $\ourlogic$. The proof is organised as follows. First, we define a Fisher-Ladner \cite{Fischer_Ladner_1977} style closure $cl(\varphi)$ for any $\ourlogic$-consistent formula $\varphi$. Then we construct a finite canonical pre-model, in which $\Cost$ and $\Bdg$ functions are undefined and prove that such functions satisfying (C1) and (C2) exist. It gives us a finite canonical model, for which we prove the truth lemma and establish completeness. Later, we use the bounded size of canonical model to prove that $\ourlogic$ is decidable.

\begin{definition}[Closure] \label{def:closure}
Let $cl(\varphi)$ be the smallest set of formulas such that
\begin{enumerate}
\item $cl(\varphi)$  contains $\Sub(\varphi)$, i.e. all subformulas of $\varphi$;
\item $cl(\varphi)$ is closed under single negation: if $\psi\in cl(\varphi)$
and $\psi$ does not start with $\neg$, then ${\neg\psi\in cl(\varphi)}$;
\item $(b_i\ge 0)\in cl(\varphi)$, for each agent $i\in\AG$;
\item $(c_i(A)\ge 0)\in cl(\varphi)$, for each $i\in \AG$ and each $A \in cl(\varphi)$; 
\item $c_i(\top) = 0\in cl(\varphi)$,
\item $c_i(A) = c_i(B)\in cl(\varphi)$ for all $A,B \in cl(\varphi)$ s.t. $A\approx B$;
\item if $C_G\psi \in cl(\varphi)$, then $E_G(\psi \wedge C_G\psi)\in cl(\varphi)$;
\item if $[?_G^A]p \in cl(\varphi)$, then $\text{BCS}(G, A)\to p \in cl(\varphi)$;
\item if $[?_G^A]\bigl(\sum\limits_{i=1}^{k}a_it_i\geq z\bigr)\in cl(\varphi)$, then \\
$\Big(\text{BCS}(G, A)\to \bigl(\sum\limits_{i=1}^{k}a_it_i\geq z\bigr)^{(G, A)}\Big)\in cl(\varphi)$;
\item if $[?_G^A]\neg \psi \in cl(\varphi)$, then $(\text{BCS}(G, A) \to \neg [?_G^A]\psi) \in cl(\varphi)$;
\item if $[?_G^A](\chi \wedge \psi) \in cl(\varphi)$, then $[?_G^A]\chi \wedge [?_G^A]\psi \in cl(\varphi)$;
\item if $[?_G^A]K_j \psi\in cl(\varphi)$, where $j\notin G$, then\\ $\text{BCS}(G, A) \to K_j[?_G^A]\psi\in cl(\varphi)$;
\item if $[?_G^A]K_i\psi\in cl(\varphi)$, where $i\in G$, then $\text{BCS}(G, A) \to$ \\
$\to \bigwedge\limits_{A'\in \{A, \neg A\}}\Bigl(\bigl(A' \to K_i(A' \to [?_G^A]\psi)\bigl)\Bigl) \in cl(\varphi)$;
\item if $[?_G^A]C_G \psi \in cl(\varphi)$, then $\{[?_G^A]K_i C_G \psi \mid i \in G\} \subseteq cl(\varphi)$. %
\end{enumerate}
\end{definition}

This construction guarantees that $cl(\varphi)$ is non-empty and finite as well as the set of all maximal
consistent subsets of $cl(\varphi)$. Note also that $|cl(\varphi)|$ is polynomial in $|\varphi|$, where $|\varphi|=|\Sub(\varphi)|$.

Now we construct a finite \emph{canonical pre-model} 
 which does not (yet) contain $\Cost$ and $\Bdg$ functions.

\begin{definition}[Canonical pre-model]\label{def:premodel} Given a $\ourlogic$-consistent formula $\varphi$, let a \emph{canonical pre-model} for $\varphi$ be a tuple \[\cM^{cp} = (W^c, \{\sim^c_i\}_{i\in\AG}, V^c), \text{where}\]
\begin{itemize}
\item $W^c$ is the set of all maximal $\ourlogic$-consistent subsets of $cl(\varphi)$;
\item $x\sim^c_i y$ \ iff for all formulas $\psi\in cl(\varphi)$, it holds that
${K_i\psi\in x}$ \ iff \ ${K_i\psi\in y}$;
\item $w\in V^c(p)$ \ iff \ $p\in w$, for each $p\in cl(\varphi)$.
\end{itemize}
\end{definition}

Next we need to prove the existence of appropriate $\Cost$ and $\Bdg$ functions. 

\begin{lemma}\label{lemma:cost} There exist $\Cost^c$ and $\Bdg^c$ functions, such that for all $(\sum\limits_{i=1}^{k}a_it_i \ge z)$ and all $w\in W^c$ it holds that $(\sum\limits_{i=1}^{k}a_it_i \ge z)\in w$ iff $\sum\limits_{i=1}^{k}a_it'_i \ge z$, where
\[t'_k =
  \begin{cases}
    \Cost_i^c(w, A), & \text{for } t_k = c_i{(A)} \\
    \Bdg_i^c(w), & \text{for } t_k = b_i 
  \end{cases}\]
\end{lemma}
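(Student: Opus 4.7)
The plan is to reduce the lemma to the completeness of the Halpern-Fagin logic of linear inequalities given by axioms (I1)-(I6), which guarantees that any finite axiomatically consistent system of linear (in)equalities admits a rational solution. Such a system will be built independently at each world of the pre-model, and the values of $\Cost^c$ and $\Bdg^c$ read off from any such solution.

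Fix $w \in W^c$. Let $T_w := \{c_i(A) \mid i \in \AG,\, A \in cl(\varphi) \cap \LPL\} \cup \{b_i \mid i \in \AG\}$, treated as real-valued unknowns. Assemble the finite system $\Sigma_w$ as follows. First, include every linear inequality $\sigma \in cl(\varphi)$ with $\sigma \in w$, read as the constraint it denotes. Second, for every linear inequality $\sigma \in cl(\varphi) \setminus w$ (so $\neg\sigma \in w$ by maximality of $w$), include the corresponding strict inequality; axioms (I5)-(I6) guarantee that $\neg(t \geq c)$ is equivalent to $t < c$, which is again a linear constraint. Third, include $b_i \geq 0$ and $c_i(A) \geq 0$ for every $i \in \AG$ and every $A \in cl(\varphi) \cap \LPL$, which belong to $w$ by closure items 3-4 together with axioms (B$^+$) and (c$^+$). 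Finally, include $c_i(\top) = 0$ (closure item 5, axiom (c$^\top$)) and $c_i(A) = c_i(B)$ for every $\approx$-equivalent pair $A, B \in cl(\varphi) \cap \LPL$ (closure item 6, axiom (c$^\approx$)).

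Since $w$ is $\ourlogic$-consistent, $\Sigma_w$ is consistent in the pure calculus (I1)-(I6); hence, by completeness of that fragment (whose semantic content is essentially Farkas' lemma), $\Sigma_w$ admits a solution, and because all coefficients and constants are rational and non-negativity of $b_i$, $c_i(A)$ is built into $\Sigma_w$, a solution in $\mathbb{Q}^+ \cup \{0\}$ exists. I then read off $\Cost^c$ and $\Bdg^c$ on $T_w$ from this solution and extend $\Cost^c$ to all $A \in \LPL$ by $\approx$-equivalence classes: set $\Cost^c(i, w, A) := \Cost^c(i, w, B)$ whenever some $B \in cl(\varphi) \cap \LPL$ is $\approx$-equivalent to $A$, and $\Cost^c(i, w, A) := 0$ otherwise. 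By construction the extension respects (C1) and (C2), and the required biconditional holds for every linear inequality $\sigma \in cl(\varphi)$: if $\sigma \in w$ then $\sigma \in \Sigma_w$ is satisfied by the solution, while if $\sigma \notin w$ then the strict negation of $\sigma$ lies in $\Sigma_w$ and is satisfied, so $\sigma$ itself fails.

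The main obstacle is justifying the appeal to Halpern-Fagin completeness inside $\ourlogic$: one must verify that any derivation in the pure inequality calculus of a contradiction from $\Sigma_w$ lifts to an $\ourlogic$-proof of $\bot$ from $w$, so that consistency transfers. This is routine because axioms (I1)-(I6) are literally present in $\ourlogic$ and the extra members of $\Sigma_w$ are theorems from (B$^+$), (c$^+$), (c$^\top$), (c$^\approx$); the verification only needs to be spelled out carefully to confirm that no modal axiom is required for the transfer, and that the translation of $\neg(t \ge c)$ into a strict inequality is done uniformly across $\Sigma_w$.
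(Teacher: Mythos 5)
Your proposal is correct and follows essentially the same route as the paper: both use the $\ourlogic$-consistency of $w$ together with (I1)--(I6) to obtain a rational solution of the system of (in)equalities determined by $w$ (including the strict inequalities coming from negated atoms and the constraints forced by (B$^+$), (c$^+$), (c$^\top$), (c$^\approx$)), then read off $\Cost^c$ and $\Bdg^c$ and extend to all of $\LPL$ via $\approx$-equivalence. Your version merely spells out two details the paper leaves implicit (the uniform translation of $\neg(t\ge c)$ into a strict constraint and the default value for formulas not $\approx$-related to anything in $cl(\varphi)$), which is a harmless refinement rather than a different argument.
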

\begin{proof}
Since every state $w\in W^c$ is $\ourlogic$-consistent,
the set of all linear inequalities contained in $w$ is satisfiable,
i.e. has at least one solution, due to (I1)-(I6) \cite{HalpernInequalities}. Then we can construct functions $\Cost^c_i(w, A)$ and $\Bdg^c_i(w)$ that agree with this solution:
for formulas $A\in\LPL$ such that $c_i(A)$ occurs in $cl(\varphi)$,
we put $\Cost^c_i(w, A)$ to be the rational that corresponds
to $c_i(A)$ in that solution;
for other formulas $B\in\LPL$, if $B\approx A$ for some formula $A\in cl(\varphi)$, then we put $\Cost^c_i(w, B):=\Cost_i^c(w, A)$. Thus we can enforce that for all $w\in W^c$ and all $A\in \LPL$ such that $c_i(A) \in cl(\varphi)$ it holds that \\(1) $\Cost^c_i(w, A)\ge 0$ for all $i\in \AG$ and all formulas $A\in\LPL$ such that $A$ occurs in $cl(\varphi)$ by the construction of $cl(\varphi)$ and $(c^+)$ axiom,
\\(2) $\Cost_i^c(w, \top)=0$, by the construction of $cl(\varphi)$ and $(c^\top)$ axiom,
\\(3) $\Cost_i^c(w, A)=\Cost_i^c(w, B)$ for all $A,B\in\LPL$ such that $A\approx B$, by $(c^\approx)$ axiom.

Similarly, we construct $\Bdg^c_i(w)$ functions in accordance with the existing solution of linear inequalities contained in $w$. This construction is well-defined, and for any $w\in W^c$ and any $i\in \AG$,
it holds that $\Bdg_i^c(w)\ge 0$ by $(B^+)$ axiom and the construction of $cl(\varphi)$.
\end{proof}

This finishes the construction of a finite \textit{canonical model} $\cM^c = (W^c\!, (\sim^c_i)_{i\in \AG}$, $\Cost^c\!, \Bdg^c\!, V^c)$.
As we have already demonstrated, this model satisfies  (C1) and (C2).
It is also clear that for all $i\in \AG$, ${\sim_i^c}$
is an equivalence relation on~$W^c$ (for details see \cite{HALPERN1992}). Moreover, all $w \in W^c$ are deductively closed in $cl(\varphi)$ \cite[Lemma 7.31]{DELbook}.




\begin{lemma}[Truth Lemma]\label{lemma:truth} Let $\cM^c$ be the canonical model for $\varphi$. For all $\psi\in cl(\varphi), w \in W^c:  \cM^c, w \vDash \psi$ iff $\psi\in w$.
\end{lemma}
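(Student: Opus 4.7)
The plan is to proceed by induction on a well-founded complexity measure on $cl(\varphi)$ chosen so that each reduction axiom $(r_p)$--$(r_{K2})$ strictly decreases complexity whenever applied; such a measure can be built in the style of the standard one for public announcement logic with common knowledge, and it is compatible with $cl(\varphi)$ because clauses (8)--(13) add only strictly smaller formulas. The propositional base case follows from the definition of $V^c$, and the Boolean connectives reduce to the deductive closure of states in $W^c$ within $cl(\varphi)$. Linear inequalities are settled by invoking \Cref{lemma:cost}. For $K_i\psi$, I would use the standard canonical-model argument: the forward direction uses the definition of $\sim_i^c$ and axiom (T); the backward direction proceeds by contraposition, extending $\{\neg\psi\} \cup \{K_i\chi \mid K_i\chi \in w\}$ to a maximal $\ourlogic$-consistent subset of $cl(\varphi)$ and checking that it lies in $W^c$ and is $\sim_i^c$-related to $w$.

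For the common knowledge operator $C_G\psi$, I would run the classical fixpoint-style argument. The direction $C_G\psi \in w \Rightarrow \cM^c, w \vDash C_G\psi$ follows by repeated application of axiom (C) and induction along $\sim_G^c$-paths. For the converse, I would let $\chi$ be the disjunction of characteristic formulas of those states $v \in W^c$ whose entire $\sim_G^c$-reachable set in $\cM^c$ already satisfies $\psi$, derive $\chi \to E_G(\psi \wedge \chi)$ in $\ourlogic$ from the inductive hypothesis applied to the smaller formulas $\psi$ and $K_i\psi$, and then apply rule (RC1) to obtain $\vdash \chi \to C_G\psi$; closure clause (7) ensures that all auxiliary formulas needed remain inside $cl(\varphi)$.

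The main obstacle is the dynamic operator $[?_G^A]\psi$. When $\psi$ is not of the form $C_H\chi$, the reduction axioms $(r_p)$--$(r_{K2})$ combined with closure clauses (8)--(13) allow me to replace $[?_G^A]\psi$ by a provably equivalent and strictly simpler formula already in $cl(\varphi)$, so the inductive hypothesis together with the semantics of $[?_G^A]$ in \Cref{def:semantics} closes the case. The genuinely hard subcase is $[?_G^A]C_H\psi$, for which no reduction axiom is available since common knowledge does not reduce under update. I would handle this by a post-update analogue of the common-knowledge argument, invoking rule (RC2) in place of (RC1) and using closure clause (14) to ensure that $[?_G^A]K_i C_H\psi$ belongs to $cl(\varphi)$ for the relevant agents. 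Concretely, I would let $\chi$ be a disjunction of characteristic formulas of states $w \in W^c$ satisfying $\text{BCS}(G,A)$ and such that every state $\sim_H$-reachable from $w$ in the updated canonical model satisfies $[?_G^A]\psi$, then verify the two antecedents of (RC2) from the inductive hypothesis applied to the simpler formulas $[?_G^A]\psi$ and $[?_G^A]K_i\psi$ (which belong to $cl(\varphi)$), and conclude $\vdash \chi \to [?_G^A]C_H\psi$, which yields $[?_G^A]C_H\psi \in w$ whenever $\cM^c, w \vDash [?_G^A]C_H\psi$, closing the induction.
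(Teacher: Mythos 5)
Your overall strategy coincides with the paper's: induction on a complexity measure that makes the reduction axioms (r$_p$)--(r$_{K2}$) decreasing, the Halpern--Moses fixpoint argument for $C_G$, reduction axioms for all dynamic subcases except $[?_G^A]C_H\psi$, and for that subcase a path-propagation argument in one direction and a characteristic-formula argument via (RC2) in the other. Two points need attention, however.

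First, your choice of $\chi$ in the truth-to-membership direction of the $[?_G^A]C_H\psi$ subcase is too restrictive, and the (RC2) application would fail as written. You take the disjunction over states that \emph{satisfy} $\text{BCS}(G,A)$ and whose $\sim_H$-reachable set in the updated model satisfies $[?_G^A]\psi$. But the second antecedent of (RC2) demands $E_{H\setminus G}\chi$ at every $\chi\wedge\text{BCS}(G,A)$-state, and a $j$-successor (for $j\in H\setminus G$) of such a state need not satisfy $\text{BCS}(G,A)$ --- agents outside $G$ need not know the group's budgets, which is precisely why the update deletes states. Such a successor falls outside your set, so $E_{H\setminus G}\chi$ fails and (RC2) cannot be applied; moreover, states where $\text{BCS}(G,A)$ fails satisfy $[?_G^A]C_H\psi$ vacuously, and your argument never puts the formula into those states. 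The paper instead takes $S=\{w\in W^c \mid \cM^c,w\models[?_G^A]C_H\psi\}$, which includes the vacuous states and makes $\chi$ closed under the relevant successors; with that $S$ your argument goes through. Second, the membership-to-truth direction of this subcase, which you dismiss as ``a post-update analogue of the common-knowledge argument,'' is where the paper spends most of its effort: one must propagate $[?_G^A]C_H\varphi'$ along a $\sim_H^{?_G^A}$-path by splitting, at each step, on whether the linking agent lies in $H\setminus G$ (unfolding via closure clauses (14) and (12)) or in $H\cap G$ (unfolding via clauses (14) and (13), and using that $\sim_i^{?_G^A}$-related states agree on $A$). Your sketch names the right ingredients but does not carry out this case split, which is the substantive content of the subcase.
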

\begin{proof}

It is relatively straightforward to define a \emph{complexity} measure $c$ for formulas
$\varphi,\psi \in \ourlogic$ such that if $\varphi$ is one of the antecedents of (r$_p$)--(r$_{K2}$) and $\psi$ is a corresponding consequent, then $c(\varphi) > c(\psi)$. For details see \cite{DELbook} and Technical Appendix. 

\textbf{Induction Hypothesis (IH)}. For all $c(\psi) < c(\varphi)$ and all maximal consistent subsets $w$ of $cl(\varphi)$, $\cM^c, w \models \psi$ if and only if $\psi \in w$.

\textbf{Cases} for $\psi = p, \neg, \wedge$ and $K_i\varphi'$ are trivial.

\textbf{Case} for $\psi=(z_1t_1+\dots + z_nt_n) \ge z$ follows straightforwardly from the choice of $\Cost^c$ and $\Bdg^c$ in \Cref{lemma:cost}. 

\textbf{Case} $\psi = C_G\varphi'$. We follow the proof from \cite{HALPERN1992}. 

\textit{Right-to-left}. Assume $C_G\varphi' \in w$. We will show by induction on $k$ that if $w'$ is $\sim_G$-reachable from $w$ in $k$ steps then $\varphi' \in w'$ and $C_G\varphi' \in w'$. For the case of $k=1$ it is clear that if $C_G\varphi' \in w'$ then $E_G(\varphi' \wedge C_G\varphi')\in w$ by the construction of $cl(\varphi)$ and the fact that $w$ is its maximally consistent subset. Then for all $y\in W^c$ if $y$ is $\sim_G$-reachable from $w$ in one step, then both $\varphi' \in y$ and $C_G\varphi'\in y$ since for some $i\in \AG$ it holds that $\forall y: w\sim_iy$ and $E_G(\varphi' \wedge C_G\varphi')\in w \Rightarrow (\varphi' \wedge C_G\varphi')\in y$. Now we can prove the induction step: assume our statement holds for $k$ and prove that it also holds for $k+1$. Assume that $w'\in W^c$ is $\sim_G$-reachable from $w$ in $k+1$ steps. Then exists $t\in W^c$ which is $\sim_G$-reachable from $w$ in $k$ steps and $w'$ is $G$-reachable from $t$ in one step. By our induction hypothesis, $C_G\varphi' \in t$ and $\varphi' \in t$. By our first argument it is clear that $\varphi' \in w'$. Then by our main induction hypothesis $M, w'\vDash \varphi$ for all $w'$ which are $\sim_G$-reachable from $w$. Then $M, w\vDash C_G\varphi'$.

\textit{Left-to-right}. Assume $M, w\vDash C_G\varphi'$. Note that
 every $y\in W^c$ contains a finite set of formulas. Then we
 can write their conjunction in our language: $\varphi_y$.  
 Let $ \chi = \bigvee\limits_{y\in \{w\mid M, w\vDash C_G\varphi'\}} \varphi_y$. Now it is easy to see that the following statements hold: $\vdash_\ourlogic \varphi_ w\to \chi$, $\vdash_\ourlogic \chi \to \varphi'$, 
$\vdash_\ourlogic \chi \to E_G\chi$.
It follows straightforwardly that $\vdash \varphi_w \to C_G\varphi'$, and hence $C_G\varphi' \in w$. Otherwise it would hold that $\neg C_G\varphi' \in w$ which would imply inconsistency of $w$.

\textbf{Case} $\psi = [?_G^A]\varphi'$. \textbf{Subcases} for $\psi = [?_G^A]p$, $\psi = [?_G^A]\neg\varphi'$, $\psi = [?_G^A] \varphi'\wedge\psi$ and $[?_G^A]K_i\varphi'$ follow from the construction of $cl(\varphi)$, and the IH that $c([?_G^A]\varphi') > c(\chi)$, where $\chi$ is a consequent of one of the axioms (r$_p$)-(r$_{K2}$). 

\textbf{Subcase} $\psi = [?_G^A]C_H\varphi'$.\\ 
\textit{Right-to-left}. Suppose that $[?_G^A]C_H\varphi' \in w$ and $w \sim^{?_G^A}_H w'$. 
By definition, $w\sim^{?_G^A}_Hw'$ means that there is a finite path $w \sim_{i_1}^{?_G^A} w_1 \sim_{i_2}^{?_G^A} w_2 \sim_{i_3}^{?_G^A} \dots \sim_{i_n}^{?_G^A} w_n=w'$ such that $i_1, \dots, i_n \in H$.  We first prove that for $k \in \{1, ..., n\}$ it holds that $[?_G^A]C_H\varphi' \in w_k$ and $[?_G^A]\varphi' \in w_k$, where $w_k = w$. For this, for each $\sim_{i_k}$ we need to distinguish  cases, where $i_k \in H - G$ and 
$i_k \in H \cap G$.

First, let us consider the case $i \in H - G$ for any $i \in \{i_1, ..., i_n\}$. Assume that $\text{BCS}(G, A) \in w_{k}$. Then, from items (14) and (12) of Definition \ref{def:closure} by MP we have that $K_{i}[?_G^A]C_H\varphi' \in w$. By the construction of the canonical model and the assumption that $w_k \sim_{i}^{?_G^A} w_{k+1}$, the latter implies that $K_{i}[?_G^A]C_H\varphi' \in w_{k+1}$. Since all states of the canonical model are deductively closed, we get $[?_G^A]C_H\varphi' \in w_{k+1}$. Finally, from that the fact that $w_{k+1}$ is deductively closed, we also have $[?_G^A]\varphi' \in w_{k+1}$ due to $C_H \varphi' \to \varphi'$ and the distributivity of $[?_G^A]$\footnote{Can be shown by a straightforward application of the definition of semantics.}.


Second, we consider the case $i \in H \cap G$. 
By items (14) and (13) of Definition \ref{def:closure}, we have that   $[?^G_A]K_{i}C_H \varphi' \in w_k$ and $\bigwedge\limits_{A'\in \{A, \neg A\}}\Big(\bigl(A' \to K_{i}(A' \to [?_G^A]C_H\varphi')\bigr)\Big) \in w_k$. Without loss of generality, assume that $A \in w_k$. Then by the deductive closure of $w_k$ we obtain $K_{i}(A' \to [?_G^A]C_H\varphi')\bigr) \in w_k$. By the construction of the canonical model and the assumption that $w_k \sim_{i}^{?_G^A} w_{k+1}$, the latter implies that $A' \to [?_G^A]C_H\varphi' \in w_{k+1}$. Moreover, assumption $w_k \sim_{i}^{?_G^A} w_{k+1}$ implies that $A \in w_{k+1}$. Hence, we have $[?_G^A]C_H\varphi' \in w_{k+1}$, and similarly to the previous case, $[?_G^A]\varphi' \in w_{k+1}$.

Finally, we are ready to prove our main claim here. Recall that we assume that $[?_G^A]C_H\varphi' \in w$ and $w \sim^{?_G^A}_H w'$. We have shown that $[?_G^A]\varphi' \in w_k$ for all $w_k$ on the path $w \sim^{?_G^A}_H w'$. By IH, this is equivalent to the fact that $\cM^c, w_k \models [?_G^A]\varphi'$ for all states $w_k$ on the path $w \sim^{?_G^A}_H w'$. The latter is equivalent to $\cM^c, w \models [?_G^A]C_H\varphi'$ by the semantics.


\textit{Left-to-right}. Assume that $\cM^c, w \models [?_G^A]C_H \varphi'$, and let us define  $S:= \{ w \in W^c \mid \cM^{c}, w \models [?_G^A]C_H \varphi' \}$ and $\chi:= \bigvee \{ \underline{X} \mid X \in S\}$, where $\underline{X}:= \varphi_1 \wedge \dots \wedge \varphi_m$, such that $\{\varphi_1, \dots, \varphi_m \} = X$. Clearly, $w \in S$, and hence, due to $w$ being deductively closed, $\chi \in w$. Moreover, it is straightforward to prove that $\chi \to [?_G^A] C_H \varphi' \in w$, and hence $[?_G^A] C_H \varphi' \in w$ (see \cite[Chapter 7.5]{DELbook} for details).  
\end{proof}

\begin{theorem}[Completeness]\label{thrm:completeness} 
For any $\varphi \in \cL_{\ourlogic}$, if $\varphi$ is valid, then $\vdash \varphi$.
\end{theorem}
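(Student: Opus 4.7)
The plan is to prove completeness by contraposition using the finite canonical model machinery already set up. Concretely, assume $\not\vdash \varphi$; I will produce a pointed model $(\cM^c, w)$ with $\cM^c, w \not\models \varphi$, which contradicts the assumed validity of $\varphi$.

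First I would consider the closure $cl(\neg \varphi)$ defined in Definition~\ref{def:closure}, which is finite and polynomial in $|\varphi|$. Since $\not\vdash \varphi$, the singleton $\{\neg \varphi\}$ is $\ourlogic$-consistent, and by a standard Lindenbaum-style construction inside the finite set $cl(\neg \varphi)$ it can be extended to a maximal $\ourlogic$-consistent subset $w \subseteq cl(\neg \varphi)$. (The enumeration is finite, so no Zorn's-lemma-style appeal is needed.) By construction $w \in W^c$, where $W^c$ is the state set of the canonical pre-model for $\neg \varphi$ in Definition~\ref{def:premodel}.

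Next I would package $w$ into the full finite canonical model $\cM^c = (W^c, (\sim^c_i)_{i \in \AG}, \Cost^c, \Bdg^c, V^c)$. The equivalence relations $\sim^c_i$ are given by Definition~\ref{def:premodel}; reflexivity, symmetry, and transitivity follow from axioms (T), (4), (5) together with the deductive closure of each $w \in W^c$ (a standard $S5$ argument, e.g.\ as in \cite{HALPERN1992}). The cost and budget functions $\Cost^c, \Bdg^c$ are obtained from \Cref{lemma:cost}, which also guarantees that (C1) and (C2) hold on $\cM^c$, so $\cM^c$ is indeed a model in the sense of Definition~\ref{def:model}. Then I apply the Truth Lemma (\Cref{lemma:truth}) to the formula $\neg \varphi \in cl(\neg \varphi)$: since $\neg \varphi \in w$, we get $\cM^c, w \models \neg \varphi$, hence $\cM^c, w \not\models \varphi$. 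This contradicts the validity of $\varphi$, so $\vdash \varphi$.

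The only genuine obstacle is making sure the reductive setup used for the Truth Lemma is correctly exploited here; in particular, one must verify that $\neg \varphi \in cl(\neg \varphi)$ (immediate from clauses 1 and 2 of Definition~\ref{def:closure}) and that $w$ is deductively closed within $cl(\neg \varphi)$, which is the usual consequence of maximal consistency in a closure set (see \cite[Lemma~7.31]{DELbook}). Everything else in the argument is routine: the Lindenbaum step is finite, the canonical-model properties have already been established, and the contrapositive conclusion is immediate from the Truth Lemma.
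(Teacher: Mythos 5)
Your proposal is correct and follows exactly the paper's own argument: assume $\not\vdash\varphi$, extend the consistent set $\{\neg\varphi\}$ to a maximal consistent subset $w$ of $cl(\neg\varphi)$, build the finite canonical model via \Cref{def:premodel} and \Cref{lemma:cost}, and invoke the Truth Lemma (\Cref{lemma:truth}) to get $\cM^c, w \not\models \varphi$. The extra detail you supply (the finite Lindenbaum step, the $S5$ properties of $\sim^c_i$, and the verification that $\neg\varphi \in cl(\neg\varphi)$) only makes explicit what the paper leaves implicit.
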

\begin{proof} 

Assume towards a contradiction that $\not \vdash \varphi$. This means that $\lnot \varphi$ is consistent, and thus it is in one of the maximal consistent sets $w$ built from $cl(\lnot \varphi)$. For such a closure we can construct the canonical model, such that $\cM^c, w \models \lnot \varphi$ by Lemma \ref{lemma:truth}, or, equivalently, $\cM^c, w \not \models \varphi$.
\end{proof}

As a corollary of the canonical model construction, we immediately get the small model theorem. 

\begin{theorem}[Small Model Theorem]\label{thm:smt}
    If $\varphi \in \mathcal{L}_\ourlogic$ is satisfiable, then it is satisfied in a model with at most $2^{|cl(\varphi)|}$ states.
\end{theorem}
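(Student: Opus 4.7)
The plan is to read the theorem off as an immediate corollary of the finite canonical model construction developed for \Cref{thrm:completeness}. First, I would observe that satisfiability of $\varphi$ entails $\ourlogic$-consistency of $\varphi$: if instead $\vdash_{\ourlogic} \neg \varphi$, then by soundness $\neg \varphi$ would be valid, contradicting the assumption that $\varphi$ has a satisfying state. Since $\varphi \in \Sub(\varphi) \subseteq cl(\varphi)$ and $cl(\varphi)$ is finite, a standard Lindenbaum-style greedy extension performed inside $cl(\varphi)$ yields a maximal $\ourlogic$-consistent subset $w \subseteq cl(\varphi)$ with $\varphi \in w$.

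Next, by \Cref{def:premodel}, such a $w$ is a state of the canonical pre-model $\cM^{cp}$, and \Cref{lemma:cost} provides $\Cost^c$ and $\Bdg^c$ satisfying (C1)-(C2), promoting $\cM^{cp}$ to a genuine model $\cM^c$ in the sense of \Cref{def:model}. Applying the Truth Lemma (\Cref{lemma:truth}) to $\varphi \in cl(\varphi) \cap w$ yields $\cM^c, w \vDash \varphi$, so $\varphi$ is satisfied in $\cM^c$.

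For the cardinality bound, note that each state of $\cM^c$ is by construction a subset of $cl(\varphi)$, and so $|W^c| \leq 2^{|cl(\varphi)|}$. No genuinely new obstacle arises beyond what was already handled for completeness; the only point worth emphasising is that \Cref{def:closure} introduces only polynomially many formulas in $|\varphi|$, so $cl(\varphi)$ is honestly finite and the exponential bound $2^{|cl(\varphi)|}$ is itself a finite number, giving the required finite satisfying model.
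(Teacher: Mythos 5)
Your proposal is correct and matches the paper's intent exactly: the paper states the theorem as an immediate corollary of the canonical model construction, and your argument (satisfiability $\Rightarrow$ consistency via soundness, then Lindenbaum inside the finite closure, then \Cref{lemma:cost} and the Truth Lemma, with the bound $|W^c| \leq 2^{|cl(\varphi)|}$ since states are subsets of $cl(\varphi)$) is precisely the fleshing-out of that corollary. No differences worth noting.
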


\section{Complexity profile of $\ourlogic$}

In this section we explore the complexity of the model checking and satisfiability problems of $\ourlogic$. In particular, we first establish that model checking $\ourlogic$ can be done in polynomial time. After that, we show that $\ourlogic$ is decidable with NEXPTIME upper bound. 

\subsection{Model checking}

\begin{definition}
    Given a finite $\cM = (W, (\sim_i)_{i\in \AG}, \Cost, \Bdg, V)$ and a formula $\varphi\in \cL_{\ourlogic}$, the \emph{global model checking problem} for $\ourlogic$ consists in finding all $w \in W$ such that $\cM, w\vDash \varphi$.
\end{definition}

In this section, we provide a polynomial time algorithm for solving the global model checking problem for $\ourlogic$. The algorithm requires for a given $\varphi$ a list of subformulas of $\varphi$ ordered so that group queries are evaluated before the formulas within the scope of the dynamic modalities, i.e. in $[?_G^A]\psi$ we would like to find the extension of $A$ first, and only then the extension of $\psi$. In such a way we can simulate the effects of queries before checking the formulas that may be impacted by them. 

Let some $\varphi\in \cL_{\ourlogic}$ be given. First, we create a list $sub(\varphi)$ of subformulas of $\varphi$ that also includes modalities $[?^A_G]$ occurring in $\varphi$. After that we label each $\psi \in sub(\varphi)$ by a sequence of dynamic modalities inside the scope of which it appears. Finally, we order the list in the following way. For $\psi^\sigma$ and $\chi^\tau$ with (possibly empty) labellings $\sigma$ and $\tau$, $\psi^\sigma$ precedes $\chi^\tau$ if and only if 
\begin{itemize}
    \item $\psi^\sigma$ and $\chi^\tau$ occur in modalities $[?^A_G]$, and $\sigma < \tau$ (i.e. $\sigma$ is a proper prefix of $\tau$), or else
    \item $\psi^\sigma$ appears in some $[?^A_G]$, and $\chi^\tau$ does not, or else
    \item $\psi^\sigma$ is of the form $[?^A_G]$, and $\chi^\tau$ is not, or else
    \item neither $\psi^\sigma$ nor $\chi^\tau$ appear in any $[?^A_G]$, and $\tau < \sigma$, or else
    \item both $\psi^\sigma$ and $\chi^\tau$ are of the form $[?^A_G]$, and $\sigma < \tau$, or else
    \item $\sigma = \tau$, and $\psi^\sigma$ is a part of $\chi^\tau$, or else
    \item $\psi$ appears to left of $\chi$ in $\varphi$.
\end{itemize}

As an example, let $\varphi: = [?^p_G][?^{p \lor q}_H]C_G p$. The ordered list $sub(\varphi)$ would look as follows: 
\begin{multline*}
  \{p, [?^p_G], p^{[?^p_G]}, q^{[?^p_G]}, (p \lor q)^{[?^p_G]}, p^{[?^p_G], [?^{p \lor q}_H]},\\ (C_G p)^{[?^p_G], [?^{p \lor q}_H]}, ([?^p_G] C_G p)^{[?^{p \lor q}_H]}, \varphi\}.  
\end{multline*}
Note that the size of $sub(\varphi)$ is bounded by $\mathcal{O}(|\varphi|)$. 

Our global model checking Algorithm \ref{euclid2} for $\ourlogic$ is based on the labelling algorithm for epistemic logic (see, e.g., \cite{HALPERN1992}). 
Thus we omit all Boolean and some epistemic cases for brevity, and provide only the case of common knowledge as an example. 
The technical complexity in our algorithm is that for the case of the dynamic modalities, we should keep track of which states and relations are preserved after a sequence of updates. Moreover, we also create a polynomial number of additional budget variables to store the remaining budget of agents after each query. 

\begin{center}
  \begin{breakablealgorithm}
    \caption{An algorithm for global model checking for $\ourlogic$}\label{euclid2}
    \footnotesize
    \begin{algorithmic}[1]
      \Procedure{Global$\ourlogic$}{$M, \varphi$}
      \ForAll{$\psi^\sigma \in sub(\varphi)$}
        \ForAll{$w \in W$}
      \Case{$\psi^\sigma = C_G{\chi}^\sigma$}       
        \State{$\mathit{check} \gets \mathit{true}$}
        \ForAll{$(w,v) \in R_G$}
          \If{$(w,v)$ is labelled with $\sigma$}
          
          \If{$v$ is not labelled with $\chi^\sigma$}
            \State{$\mathit{check} \gets \mathit{false}$}
            \State{\textbf{break}}
          \EndIf
          \EndIf
        \EndFor
        \If{$\mathit{check}$}
          \State{label $w$ with $C_G{\chi}^\sigma$}
        \EndIf
      \EndCase
        
    \Case{$\psi^\sigma = [?^A_G]^\sigma$}
        \ForAll{$i \in \AG$}
            \ForAll{$(v,u)^\sigma \in \sim_i$}
                \If{$\cM,v \models BCS^\sigma (G,A)$ and $\cM,u \models BCS^\sigma(G,A)$}
                \ForAll{$j \in G$}
                    \State{$Bdg_j(v)^\sigma \gets Bdg_j^{?_G A}(v)$}
                    \State{$Bdg_j(u)^\sigma \gets Bdg_j^{?_G A}(u)$}
                \EndFor
                \If{$i \not \in G$}
                    \State{label $(v,u)$ with $\sigma, [?^A_G]$}
                    \Else
                    \If{$v$ is labelled with $A$ iff $u$ is labelled with $A$}
                        \State{label $(v,u)$ with $\sigma, [?^A_G]$}
                \EndIf
                
                    \EndIf
                \EndIf
            \EndFor
        \EndFor        
    \EndCase

    \Case{$\psi^\sigma = ([?^A_G]\chi)^\sigma$}
        \If{$w$ is labelled with $\chi^{\sigma, [?^A_G]}$}
            \State{label $w$ with $([?^A_G]\chi)^\sigma$}
        \EndIf
    \EndCase

        \EndFor
    \EndFor      
      
     \EndProcedure
    \end{algorithmic}
  \end{breakablealgorithm}
\end{center}

The algorithm mimics the definition of semantics, and its correctness can be shown by induction on $\varphi$. 
The preparation of ordered list $sub(\varphi)$ takes $\mathcal{O}(|\varphi|^2)$ number of steps. 
On line 16, $BCS^\sigma(G,A)$ is the budget constraint for query $A$ for agents from $G$ after the sequence of updates $\sigma$. Respective budgets of agents after updates are calculated on lines 18 and 19. Observe, that computing $Bdg_j(v)^\sigma$ and $BCS^\sigma (G,A)$ requires only arithmetical computation with values of all variables known. This can be done in polynomial time. Each computation of $BCS^\sigma$ is called for $\mathcal{O}(|\varphi|\cdot |W| \cdot |\AG| \cdot |\!\sim\!|)$ times. Finally, each computation of some $Bdg_j(v)^\sigma$ is called for $\mathcal{O}(|\varphi|\cdot |W| \cdot |\AG|^2 \cdot |\!\sim\!|)$ times. Since in the worst case, $BCS^\sigma$ and $Bdg_j(v)^\sigma$ require a polynomial number of steps, model checking $\ourlogic$ is in polynomial time. 

\begin{theorem}\label{thrm:modelchecking}
    Model checking $\ourlogic$ is in P.
\end{theorem}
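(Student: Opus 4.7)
The plan is to prove correctness and a polynomial time bound separately, using the ordered list $sub(\varphi)$ as the induction skeleton. First I would verify that the ordering defined before the algorithm is well-defined and computable in $\mathcal{O}(|\varphi|^2)$ time, and that it has the key property needed for induction: whenever a subformula $\chi^{\sigma, [?^A_G]}$ appears in $sub(\varphi)$, all of $A$'s extension in the relevant (sequence-of-)updated models, as well as $BCS^\sigma(G,A)$, have already been computed and stored on states via earlier iterations. This justifies processing each $[?^A_G]$ modality as a ``virtual update'' that merely relabels states, relations, and budgets rather than materially constructing a new Kripke model.

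Second, I would prove correctness by induction on the position of $\psi^\sigma$ in $sub(\varphi)$, showing that $w$ is labelled with $\psi^\sigma$ if and only if $\cM^\sigma, w \vDash \psi$, where $\cM^\sigma$ denotes the model obtained from $\cM$ by successive application of the updates in $\sigma$. The Boolean and $K_i$ cases are standard, and the $C_G$ case follows from the usual fixed-point characterisation computed over the $\sigma$-labelled $G$-reachability relation. For $[?^A_G]^\sigma$, I would argue that the cases on whether $i \in G$ or not in Algorithm~\ref{euclid2} directly mirror the two branches of $\sim^*_j$ in Definition~\ref{def:updmodel}, and that the guard $\cM, v \vDash BCS^\sigma(G,A)$ on both endpoints implements the restriction $W' = \{w \mid \cM, w \vDash \text{BCS}(G,A)\}$. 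The update of budgets via the fresh variables $Bdg_j(v)^\sigma$ implements Definition~\ref{def:updmodel} exactly, and because these are only consulted by the linear-inequality case at deeper positions, correctness propagates.

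Third, I would bound complexity. The list $sub(\varphi)$ has $\mathcal{O}(|\varphi|)$ entries, each associated with a label $\sigma$ of length at most the modal depth $d(\varphi) \le |\varphi|$. For each entry the algorithm ranges over $W$, and in the heaviest cases (common knowledge, dynamic update) also over $\sim_i$ or over pairs in $W \times W$. Evaluating $BCS^\sigma(G,A)$ and updating $Bdg_j(v)^\sigma$ at a state are arithmetic operations on rationals whose encodings grow by an additive amount proportional to $|\sigma|$ per update, hence remain polynomial in $|\varphi|+|\cM|$. Multiplying these factors yields a bound of the form $\mathcal{O}(|\varphi|^c \cdot |\cM|^{c'})$ for small constants $c,c'$, which suffices for membership in $\mathsf{P}$.

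The main obstacle I anticipate is the bookkeeping for nested dynamic modalities: a naive reading of the algorithm could suggest creating a fresh copy of the model for every prefix $\sigma$, which would be exponential in the modal depth. The key observation that makes the bound polynomial is that $\sigma$ is only used as a label on states, edges, and budget entries, and the number of distinct labels that actually arise during the run is bounded by $|sub(\varphi)|$ rather than by $2^{d(\varphi)}$. Making this accounting rigorous -- and in particular showing that checks like ``$(v,u)$ is labelled with $\sigma$'' can be performed in polynomial time by storing labels as pointers into $sub(\varphi)$ -- is the step that would require the most care in a full write-up.
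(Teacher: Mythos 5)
Your proposal follows essentially the same route as the paper: a labelling algorithm driven by the ordered list $sub(\varphi)$, correctness by induction on the position of a labelled subformula (with updates simulated by relabelling states, edges, and budget variables rather than constructing new models), and a polynomial bound obtained by counting the calls to the $BCS^\sigma$ and $Bdg_j(v)^\sigma$ computations. If anything, your write-up is more explicit than the paper's own argument about why the label bookkeeping avoids an exponential blow-up in the modal depth, which the paper leaves implicit.
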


\subsection{Decidability}
\begin{definition}
    Given a formula of $\varphi\in \cL_{\ourlogic}$, the \emph{satisfiability problem} for $\ourlogic$ consists in determining whether there is a pointed model $\cM,w$ such that $\cM,w \models \varphi$.
\end{definition}

\begin{theorem}[Decidability] The satisfiability problem for $\ourlogic$ is decidable.
\end{theorem}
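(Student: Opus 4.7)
The plan is to leverage the Small Model Theorem (\Cref{thm:smt}) together with the polynomial-time model checking result (\Cref{thrm:modelchecking}) to obtain a non-deterministic decision procedure running in exponential time. The overall strategy is a guess-and-check approach: for a given $\varphi \in \cL_{\ourlogic}$, non-deterministically guess a finite pointed model of size bounded by $2^{|cl(\varphi)|}$ and verify that $\varphi$ holds there. Because $|cl(\varphi)|$ is polynomial in $|\varphi|$ by construction (\Cref{def:closure}), the state space has at most exponentially many worlds, and its description (states, equivalence relations, and valuation restricted to propositional letters appearing in $\Sub(\varphi)$) can be guessed in non-deterministic exponential time.

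The non-trivial issue is that the $\Cost$ and $\Bdg$ functions take values in the infinite set $\mathbb{Q}^+ \cup \{0\}$, so they cannot be guessed directly by enumeration. To handle this, the plan is to reduce the existence of suitable $\Cost$ and $\Bdg$ functions to the feasibility of a finite system of linear inequalities, exactly along the lines of \Cref{lemma:cost}. Concretely, for each guessed world $w$ in the candidate frame, non-deterministically guess which linear-inequality formulas and which propositional/epistemic/dynamic subformulas from $cl(\varphi)$ are true at $w$, subject to the usual consistency conditions (Boolean completeness, accessibility-closure for $K_i$ and $C_G$, and the reduction axioms for $[?_G^A]$). Treat the values $\Cost_i(w,A)$ for $A$ appearing in $cl(\varphi)$ and $\Bdg_i(w)$ as unknowns; the guessed inequality literals, together with axioms (B$^+$), (c$^+$), (c$^\top$) and (c$^\approx$), yield a finite linear program per world whose feasibility is decidable (indeed, in P) by standard linear programming.

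Once a concrete rational solution is obtained for each world, extend $\Cost_i(w,\cdot)$ to all other propositional formulas by putting $\Cost_i(w,B) := \Cost_i(w,A)$ whenever $B \approx A$ for some $A$ with $c_i(A)$ appearing in $cl(\varphi)$, and $\Cost_i(w,B) := 0$ otherwise; this respects (C1) and (C2). The resulting structure is a genuine model in the sense of \Cref{def:model}. Then run the polynomial-time model checking algorithm of \Cref{thrm:modelchecking} on this model to test whether $\cM, w \models \varphi$ at the guessed designated state.

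The correctness of the procedure follows in one direction from \Cref{thm:smt} (every satisfiable formula has a model of the required size, which in turn determines a feasible linear system by \Cref{lemma:cost}), and in the other direction from the fact that the guessed object is an actual $\ourlogic$-model. The main technical delicacy is ensuring that the linear-programming step and the symbolic guess of which inequalities hold at each world are consistent with one another — which is why the guessing of the inequality literals in $cl(\varphi)$ must be done simultaneously with, and serve as input to, the linear system rather than being checked a posteriori. The overall running time is non-deterministic exponential, giving the NEXPTIME upper bound and, in particular, decidability.
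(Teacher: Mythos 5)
Your proposal is correct and follows essentially the same route as the paper: bound the model size via the Small Model Theorem, replace the infinitely many choices of $\Cost$ and $\Bdg$ by a guessed truth-value assignment to the inequality subformulas (the paper's pseudo-function $P^*$), recover concrete rational values by solving a per-world linear system that includes the (B$^+$), (c$^+$), (c$^\top$), (c$^\approx$) constraints, and verify $\varphi$ with the polynomial-time model checker, yielding the NEXPTIME upper bound. The only cosmetic difference is that the paper phrases the search as a deterministic enumeration of exponentially many pre-structures before noting the nondeterministic guess-and-check variant, whereas you start from the latter directly.
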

\begin{proof} 
The decidability of $\ourlogic$ follows from the small model theorem (Theorem \ref{thm:smt}). This theorem states that a formula $\varphi\in \cL_{\ourlogic}$ is satisfiable iff it is satisfiable in a model
$M$ with at most $2^{|cl(\varphi)|}$ states. Usually, like in the cases of PDL \cite{Fischer_Ladner_1977} or $S5_n^C$ \cite{HALPERN1992}, there are finitely many such models, so it is sufficient to enumerate them and check whether $\varphi$ holds in any. But in our case there are infinitely many choices of $\Bdg$ and $\Cost$ functions, so there are infinitely many models with at most $2^{|cl(\varphi)|}$ states. In order to overcome this difficulty, we apply the technique used in \cite{Dragan}.  The idea is to consider pre-structures $M'=(W, (\sim_i)_{i\in \AG}, P^*, V^c)$, in which  $P^*$ is a `pseudo' function that emulates actual $\Cost$ and $\Bdg$ for all subformulas of $\varphi$ of the form $\sum\limits_{i=1}^{k}a_it_i \ge z$: 
$$P^*\colon W\times (\sum\limits_{i=1}^{k}a_it_i \ge z) \longrightarrow \{true, false\}.$$
Then we can define a satisfiability relation $\vDash'$, similarly to \Cref{def:semantics} in all cases except $(\sum\limits_{i=1}^{k}a_it_i \ge z)$. We say that $$M', w\vDash' (\sum\limits_{i=1}^{k}a_it_i \ge z) \text{ iff } P^*(w, \sum\limits_{i=1}^{k}a_it_i \ge z)=true.$$ 

Since there are only finitely many such pre-structures with $\leq 2^{|cl(\varphi)|}$ states, we can enumerate them all and check if $\varphi$ holds in any of $M'$'s according to $\vDash'$. If it does, then we need to check if $P^*$ can be replaced with ($\Cost, \Bdg$). For this purpose we define a set of linear inequalities $I(w)$ for each state $w\in M'$, such that
\begin{itemize}
    \item $(\sum\limits_{i=1}^{k}a_it_i \ge z)\in I(w)$ iff  $P^*(w, \sum\limits_{i=1}^{k}a_it_i \ge z)=true$;
    \item $(\sum\limits_{i=1}^{k}a_it_i < z)\in I(w)$ iff  $P^*(w, \sum\limits_{i=1}^{k}a_it_i \ge z)=false$;
    \item $c_i(A)\geq 0, b_i\geq 0$, $c_i(\top)=0\in I(w)$ for all $A \in cl(\varphi)\cap \LPL$, $i\in \AG$; 
    \item $c_i(A)=c_i(B)\in I(w)$ for all $A \in cl(\varphi)\cap \LPL$, s.t. $A\approx B$.
    \end{itemize}
    It remains to find at least one solution of such system of linear inequalities to define $(\Cost, \Bdg)$. But since a problem of solving a system of linear inequalities is well-known to be decidable in polynomial time \cite{KHACHIYAN1980}, then given a pre-model $M'$ we can extend it to a normal model $\cM$ according to \Cref{lemma:cost} in finitely many steps. 
    
    Finally, given a $\varphi$, we can enumerate finite pre-models of size at most $2^{|cl(\varphi)|}$, solve the corresponding systems of linear inequalities to extend these pre-models to normal models, and check whether $\varphi$ is true in any of them. If yes, the formula is satisfiable, if not, then $\varphi$ is unsatisfiable, since each satisfiable formula has a model of size at most $2^{|cl(\varphi)|}$.  




This algorithm gives us a NEXPTIME upper-bound, because each satisfiable formula $\varphi$
has a model of exponential size in $|\varphi|$.  So, we can guess an exponential model $\cM'$ and a state $w$ of the model, such that $\cM', w\vDash\varphi$, which can be checked in polynomial time by \Cref{thrm:modelchecking}.
\end{proof}

We leave finding the precise complexity bounds for future work, noting that the satisfiability problem for $\ourlogic$ is EXPTIME-hard from the EXPTIME-completeness of $S5_n^C$\cite{HALPERN1992}.

\balance

\section{Related Work}

In the literature, there are several reasons to put resource constraints on agents. One may want to deal with non-omniscient agents, and thus treat resources as limitations on their reasoning abilities \cite{FAGIN1987}. Similarly, \cite{Duc1997} explores rational but non-omniscient agents. The logics for agents as perfect reasoners  who take time to derive consequences of their knowledge were studied in \cite{Alechina2002,Alechina2004,Alechina2009}. 
In a similar vein, \cite{Balbiani2019} proposed a logic for reasoning about the formation of beliefs through perception or inference in non-omniscient resource-bounded agents.

One may also want to constrain agents' strategic abilities by introducting costs of actions. Extensions of various strategic logics, like alternating-time temporal logic \cite{alur02} and coalition logic \cite{pauly02}, for resource bounded agents were proposed and studied in \cite{Alechina_Logan_Nguen_Rakib_2010,Bulling_Farwer_2010,Alechina_Logan_Nguen_Rakib_2011,DELLAMONICA2011,Alechina_Logan_Nguyen_Raimondi_2014,Nguyen_2015,Alechina_Logan_Nguyen_Raimondi_Mostarda_2015,Naumov_ijcai2017}. 

Our work deals with knowledge and communication in the settings, where information available to
agents might be constrained by their resources. In such settings, resources would be treated as a cost of some `information mining' process for agents. 
A similar proposal introduced a logical system for reasoning about budget-constrained knowledge \cite{Naumov15}. This approach, however, deals with resource bounded knowledge statically, while we introduce a DEL-style framework. DEL-style logics with inferential actions that require spending resources were studied in \cite{Solaki2020,Solaki2022,JELIA2021}. 
Alternatively, one can also explore agents that can reason about epistemic formulas only up to a specific modal depth as well as about public announcements of bounded depth \cite{Farid_Martin2023}. Finally, logics for resource-bounded agents have also been of interest in the epistemic planning community \cite{Engesser2017,Belardinelli2021,Bolander2021}.

Finally, our work is also inspired by \cite{HalpernInequalities,Fagin_Halpern1994}, where linear inequalities were introduced to reason about probabilities. Linear inequalities were also used in a probabilistic DEL setting \cite{Achimescu_Baltag_Sack2016}. An alternative way to encode linear inequalities was proposed in \cite{DELGRANDE_Rene_Sack2019}.

\section{Discussion and Future Work}

In this paper we presented a dynamic epistemic logic for Semi-Public Queries with Budgets and Costs ($\ourlogic$) and demonstrated that this logic is complete, decidable
and has an efficient model checking procedure.  
We believe that these results can find their applications in various fields of multi-agent systems like formal verification, automated reasoning and epistemic planning.

In order to keep the generality of our framework we have tried to impose as few semantic restrictions as possible. Thus, we allow agents to be unaware of the costs of some formulas for themselves as well as for other
agents, and of their and others’ budgets. The proposed framework, however, can be straightforwardly extended to capture alternative modelling settings. Thus, one can add axioms like (A1) $(b_i=k) \to K_i(b_i=k)$ and (A2) $(c_i(A)=k)\to K_i(c_i(A)=k)$ to impose that all agents know their budget and how much it would cost them to mine formulas. 
Some existing papers on modelling resource bounded agents, e.g. \cite{Alechina_Logan_Nguen_Rakib_2011}, assume that resource bounds should be represented as vectors $(r_1, \dots, r_k)$ of costs of actions, where each $r_l\in (r_1, \dots, r_k)$ represents a specific resource. In this paper, we deal with a single resource to keep the presentation simple. However, one can also implement multiple resources in our framework. Let $c^l_i(A)$ denote the amount of $l$'s resource required from agent $i$ to make a query about $A$ and $b^l_i(A)$ denote the amount of $l$'s resource that agent $i$ has. Then, the cost of $A$ and the budget of $i$ may be represented as two vectors $(c^1_i(A), \dots, c^k_i(A))$ and $(b^1_i, \dots, b^k_i)$ respectively. Now, it is relatively straightforward to modify the proposed framework to deal with multiple resources. 

Currently, $\ourlogic$ allows only propositional formulas $A$ to occur under $[?_G^A]$. The extension of $\ourlogic$, where any formula $\varphi$ can occur in $[?_G^\varphi]$ is a matter of future work.
Apart from that, we also plan to extend our framework and allow quantification over queries in the spirit of logics of quantified announcements, e.g. APAL \cite{balbiani_baltag_ditmarsch_herzig_hoshi_delima_2008}, GAL \cite{AGOTNES2010}
, CAL \cite{Agotnes_vanDitmarsch2008,Galimullin2021}, and their versions with group knowledge \cite{gald,Ågotnes2023}. Another important direction for future work is to consider more complicated communicative actions (e.g. \cite{vanBenthem2007-VANDLF,Ciardelli2015,LPAR23:Learning_What_Others_Know}) in our settings. Finally, we would also like to find tight complexity bounds for the $\ourlogic$ satisfiability problem.




\begin{acks}
We thank the anonymous AAMAS 2024 reviewers for their incisive and constructive comments. This work (Vitaliy Dolgorukov) is an output of a research project implemented as part of the Basic Research Program at the National Research University Higher School of Economics (HSE University).
\end{acks}

\bibliographystyle{ACM-Reference-Format} 
\bibliography{sample}

\clearpage
\appendix
\section*{A.1 Technical Appendix}
\textbf{Theorem 3.1 (Soundness).} The axiomatisation of $\ourlogic$ is sound.

\begin{proof} To prove Soundness of $\ourlogic$, we need to demonstrate that all axioms presented in \Cref{tab:ax} are valid and all inference rules preserve validity. The cases for (I1)--(I6) \cite{HalpernInequalities}, (K)--(C) and (Nec$_{i}$)--(RC1) \cite{HalpernBook,HALPERN1992} are standard and we refer to original papers. The cases for (B$^+$)--(c$^\approx$), (r$_p$) and (r$_\neg$)--(r$_{K2}$) are fairly simple and we omit them. Consider the remaining cases.

\textbf{Case (r$_{\ge}$).} 

\[\cM, w\vDash[?_G^A]\bigl(\sum\limits_{i=1}^{k}a_it_i\geq z\bigr) \text{ iff }\cM, w\vDash\text{BCS}(G, A)\to \bigl(\sum\limits_{i=1}^{k}a_it_i\geq z\bigr)^{(G, A)}\]

Note that $\cM, w\vDash [?_G^A] (z_1t_1+\dots + z_nt_n) \ge z$ iff $\cM, w\vDash \text{BCS}(G, A)$ implies $\cM^{?_G^A}, w\vDash (z_1t_1+\dots + z_nt_n) \ge z$ by \Cref{def:semantics}. Note also that $\cM^{?_G^A}, w\vDash (z_1t_1+\dots + z_nt_n) \ge z$ is equivalent to $\cM, w\vDash (z_1t^*_1+\dots + z_nt^*_n) \ge z$, where $t^*_k = t_k$ for $t_k = c_A$ (for all $A\in\cL_{PL}$) and $t_k = b_j$ (for all $j\notin G$) and if $t_k = b_i$ (for $i\in G$), then $t^*_k = t_k-\frac{\min_{j\in G}(c_j(A))}{|G|}$. The previous claim holds since  $\Cost_i^{?_G^A}(B) = \Cost_i(B)$, $\Bdg^{?_G^A}_j(w) = \Bdg_j(w)$ for $i\neq j$ and $\Bdg^{?_G^A}_i(w) = \Bdg_i(w) - \frac{\min_{j\in G}(\Cost_j(A))}{|G|}$. Then $\cM, w\vDash [?_G^A] (z_1t_1+\dots + z_nt_n) \ge z$ iff $\cM, w\vDash \text{BCS}(G, A)$ implies $\cM, w\vDash[(z_1t_1+\dots + z_nt_n) \ge z)]^{(G, A)}$. 

\textbf{Case (r$_{K1}$).}  
\[\cM, w\vDash[?_G^A]K_j \varphi \text{ iff } \cM, w\vDash \text{BCS}(G, A) \to K_j [?_G^A]\varphi, \text{for } j\notin G \]

($\Rightarrow$) Let $\cM, w\vDash [?_G^A]K_j \varphi$ (1) and $\cM, w\vDash \text{BCS}(G, A)$ (2). From (1), $\cM, w\vDash \text{BCS}(G, A)$ implies $\cM^{?_G^A}, w\vDash K_j \varphi$ by \Cref{def:semantics}. Then $\cM^{?_G^A}, w\vDash K_j \varphi$ from (2).  Then $\forall w': w\sim^{?_G^A}_j w' \Rightarrow \cM^{?_G^A}, w'\vDash \varphi$ by \Cref{def:semantics}. This, together with the fact that $w\sim^{?_G^A}_jw'$ iff $w\sim_jw'$, $\cM, w\vDash \text{BCS}(G, A)$ and $\cM, w'\vDash \text{BCS}(G, A)$,  implies that
$\forall w',$ s.t. $w\sim_j w'$: $\cM, w'\vDash \text{BCS}(G, A) \Rightarrow \cM^{?_G^A}, w'\vDash \varphi$. This is equivalent to $\cM, w\vDash K_j[?_iA]\varphi$, by \Cref{def:semantics}.

($\Leftarrow$) The case for $\cM, w\nvDash \text{BCS}(G, A)$ is trivial. Consider only the case for $\cM, w\vDash \text{BCS}(G, A) \wedge K_j[?_iA]\varphi$. Then $\forall w'$ s.t. $ w \sim_j w', \cM, w'\vDash \text{BCS}(G, A) \Rightarrow \cM^{?_G^A}, w'\vDash \varphi$. Then $\forall w', w\sim^{?_G^A}_jw' \Rightarrow \cM^{?_G^A}, w'\vDash \varphi$. By \Cref{def:semantics}, $\cM^{?_G^A}, w\vDash K_j\varphi$ and hence $\cM, w\vDash [?_G^A]K_j\varphi$.

\textbf{Case (r$_{K2}$).} 
\[\cM, w\vDash[?_G^A]K_i \varphi \text{ iff } \cM, w\vDash\text{BCS}(G, A) \to\]\[\to\bigwedge\limits_{A'\in \{A, \neg A\}}\Big(\bigl(A' \to K_i(A' \to [?_G^A]\varphi)\bigr)\Big), \text{where } i\in G\] 

($\Rightarrow$)  Let $\cM, w\vDash [?_G^A]K_i\varphi$ (1) and $\cM, w\vDash \text{BCS}(G, A)$ (2).
From (1), (2) and \Cref{def:semantics} we get $\cM^{?_G^A}, w\vDash K_i \varphi$.
Then $\forall w', w\sim^{?_G^A}_iw' \Rightarrow \cM^{?_G^A}, {w'\vDash \varphi}$. Assume that $\cM, w\vDash A$. Then $\forall w'$, s.t. $w\sim_i w' \text{ and } \cM, w'\vDash A \text{ and } \cM, w'\vDash \text{BCS}(G, A)$ implies $\cM^{?_G^A}, w'\vDash \varphi$. This is equivalent to $\cM, w\vDash K_i (A\to [?_G^A]\varphi)$ by \Cref{def:semantics}. Then, from our assumption we have proved that $\cM, w\vDash A \to K_i (A\to [?_G^A]\varphi)$. By a symmetric argument, we can show that $\cM, w\vDash \neg A \to K_i (\neg A\to [?_G^A\varphi)$.

($\Leftarrow$) The case for $\cM, w\nvDash \text{BCS}(G, A)$ is trivial. Consider only the case for $\cM, w\vDash \text{BCS}(G, A) \wedge \bigwedge\limits_{A' \in \{A, \neg A \}}
\bigl(A' \to K_i(A' \to [?_G^A]\varphi)\bigr)$. Assume that $\cM, w\vDash A$. Then $\cM, w\vDash K_i(A\to [?_G^A]\varphi)$.  Similarly, assuming $\cM, w\vDash \neg A$ entails $\cM, w\vDash K_i(\neg A\to [?_G^A]\varphi)$. Then for all $w'$, s.t. $w\sim_i w'$ and $w'$ agrees with $w$ on the valuation of $A$ it holds that $\cM, w'\vDash [?_G^A]\varphi$ and hence  $\cM, w'\vDash \text{BCS}(G, A)$ implies $\cM^{?_G^A}, w'\vDash \varphi$. Then it holds that $\forall w': w\sim^{?_G^A}_iw' \Rightarrow \cM^{?_G^A}, w'\vDash \varphi$. And hence $\cM^{?_G^A}, w\vDash K_i\varphi$. By \Cref{def:semantics}, the last claim implies $\cM, w\vDash [?_G^A]K_i\varphi$.

\textbf{Case (RC2).} \[\cM\vDash\chi\to[?_G^A]\psi \text{ and }\]
\[\cM\vDash (\chi \wedge \text{BCS}(G, A))\to\bigwedge\limits_{A'\in \{A, \neg A\}}(A'\to E_{H\cap G}(A'\to\chi))\wedge E_{H\setminus G}\chi,\]
\[\text{ implies } \cM\vDash \chi\to [?_G^A]C_H\psi\]

Since cases for $H\cap G = \emptyset$ and $H\setminus G = \emptyset$ are trivial, consider only the case for $H\cap G \neq \emptyset$ and $H\setminus G \neq \emptyset$. From the antecedent we know that $\cM, w\vDash \chi$ implies $\cM^{?_G^A}, w\vDash \psi$ and  $\cM^{?_G^A}, w\vDash E_H\chi$ for all $w\in W^{?_G^A}$. The latter implies $\cM^{?_G^A}, w\vDash \chi$ by the reflexivity of $\sim$. Since $\cM^{?_G^A}, w\vDash \psi$ for all $w\in W^{?_G^A}$, it also holds that $\cM^{?_G^A}, w'\vDash \psi$ for all $w'\sim^{?_G^A}_H w$ and then $\cM^{?_G^A}, w'\vDash E_H\psi$. That gives us $\cM^{?_G^A}, w\vDash \chi \to E_H(\chi \wedge \psi)$. By RC1 we get $\cM^{?_G^A}, w\vDash \chi \to C_H\psi$ and then $\cM, w\vDash \chi$ implies $\cM^{?_G^A}, w\vDash \chi \to C_H\psi$ for all $w\in W^{?_G^A}$. Together with the fact that $\cM^{?_G^A}, w\vDash \chi$ it gives us $\cM, w\vDash \chi \to [?_A^G]C_H\psi$.
\end{proof}

\textbf{Lemma 3.6 (Truth Lemma).} Let $\cM^c$ be the canonical model for $\varphi$. For all $\psi\in cl(\varphi), w \in W^c:  \cM^c, w \vDash \psi$ iff $\psi\in w$. 

\begin{proof}

Note that the proof of the Truth Lemma is organized by the induction on a \emph{complexity} measure for $\varphi$, denoted $c(\varphi)$. We define this \emph{complexity} measures as follows.

\textbf{Definition ( Formula Complexity).} The complexity $c: \cL_{\ourlogic} \to  \mathbb{N} $ of $\ourlogic$ formulas is defined as follows: \\
(1) $c(p) = 1$;\\
(2) $c((z_1t_1+ \dots + z_nt_n)\geq z) = 1$;\\
(3) $c(\neg \varphi) = c(\varphi) + 1$;\\
(4) $c(\varphi \wedge \psi) = \max\{ c(\varphi), c(\psi) \} + 1$;\\
(5) $c(K_i \varphi) = c(\varphi) + 1$;\\
(6) $c(C_G \varphi) = c(\varphi) + 1$;\\
(7) $c([?_G^A]\varphi) = (c(A)+ 5) \cdot c(\varphi)$; 

The number 5 in this Definition seems arbitrary, but we may take take any natural number that gives us the
following properties.

\textbf{Proposition.} For all $\psi, \varphi, \chi \in \cL(\ourlogic)$ and all $A\in \cL_{PL}$ 
\begin{enumerate}
\item $c(\psi) \geq c(\varphi)$, for all $\varphi\in\Sub(\psi)$ 
\item $c([?_G^A]p) > c(\text{BCS}(G, A)\to p)$ 
\item  $c([_GA]\bigl(\sum\limits_{i=1}^{k}a_it_i\geq z\bigr)) >  c\Big(\Big(\text{BCS}(G, A) \to \bigl(\sum\limits_{i=1}^{k}a_it_i\geq z\bigr)^{(G, A)}\Big)\Big)$
\item $c([?_G^A]\neg \varphi) > c(\text{BCS}(G, A) \to \neg [?_G^A]\varphi)$ 
\item $c([?_G^A](\varphi \wedge \psi)) > c([?_G^A]\varphi \wedge [?_G^A]\psi)$
\item $c([?_G^A]K_j \varphi) > c(\text{BCS}(G, A) \to K_j [?_G^A]\varphi)$,
 where $j\notin G$ 
 \item $c([?_G^A]K_i \varphi \leftrightarrow \text{BCS}(G, A)) >
c(\bigwedge\limits_{A'\in \{A, \neg A\}}\Big(\bigl(A' \to K_i(A' \to [?_G^A]\varphi)\bigr)\Big))$,  where $i\in G$
\item $c([?_G^A]C_H\varphi)>c([?_G^A]\varphi)$
\end{enumerate}

 \end{proof}


\end{document}